\theoremstyle{plain}
\newtheorem{theorem}{Theorem}[section]
\newtheorem{lemma}[theorem]{Lemma}
\newtheorem{proposition}[theorem]{Proposition}
\newtheorem{corollary}[theorem]{Corollary}
\theoremstyle{definition}
\newcommand*\ie{i.\kern.1em e.\ }
\newcommand*\eg{e.\kern.1em g.\ }
\newcommand{\dist}{\mathsf{dist}}
\newcommand{\dgn}{\mathsf{\delta}}
\renewcommand{\Pr}[1]{\bP \left[ #1 \right]} % Probability
\newcommand{\Pru}[2]{\underset{ #1 }\bP \left[ #2 \right]}
\newcommand{\Pruc}[3]{\underset{ #1 }\bP \left[ #2 \;\; \mid \;\; #3 \right]}
\newcommand{\zo}{\{0,1\}}
\newcommand{\cF}{\ensuremath{\mathcal{F}}}
\newcommand{\cG}{\ensuremath{\mathcal{G}}}
\newcommand{\cH}{\ensuremath{\mathcal{H}}}
\newcommand{\bN}{\ensuremath{\mathbb{N}}}
\newcommand{\bP}{\ensuremath{\mathbb{P}}}
\newcommand\splitaftercomma[1]{%
  \begingroup
  \begingroup\lccode`~=`, \lowercase{\endgroup
    \edef~{\mathchar\the\mathcode`, \penalty0 \noexpand\hspace{0pt plus .25em}}%
  }\mathcode`,="8000 #1%
  \endgroup
}
\title{Optimal Adjacency Labels for Subgraphs of Cartesian Products\footnote{The results of this paper previously appeared in the proceedings of the 50th International Colloquium on Automata, Languages and Programming, ICALP 2023 \cite{EHZ23}.}}
\author{Louis Esperet\thanks{Partially supported by the French ANR Projects
  GATO (ANR-16-CE40-0009-01), GrR (ANR-18-CE40-0032), TWIN-WIDTH
  (ANR-21-CE48-0014-01) and by LabEx
  PERSYVAL-lab (ANR-11-LABX-0025).}\\ Laboratoire G-SCOP, CNRS, France \\ \texttt{louis.esperet@grenoble-inp.fr}
\and
Nathaniel Harms\thanks{This work was partly funded by NSERC, while the author was
a student at the University of Waterloo, visiting Laboratoire G-SCOP and the University of Liverpool.} \\ EPFL, Switzerland \\
\texttt{nathaniel.harms@epfl.ch}
\and Viktor Zamaraev  \\ University of Liverpool, UK \\ \texttt{viktor.zamaraev@liverpool.ac.uk}}
\begin{document}

\maketitle

\begin{abstract}
For any hereditary graph class $\cF$, we construct optimal adjacency labeling schemes for the
classes of subgraphs and induced subgraphs of Cartesian products of graphs in $\cF$. As a
consequence, we show that, if $\cF$ admits efficient adjacency labels (or, equivalently, small
induced-universal graphs) meeting the information-theoretic minimum, then the classes of subgraphs
and induced subgraphs of Cartesian products of graphs in $\cF$ do too. Our proof uses ideas
from randomized communication complexity, hashing, and additive combinatorics, and improves upon recent results of Chepoi,
Labourel, and Ratel [Journal of Graph Theory, 2020].
\end{abstract}

\thispagestyle{empty}
\setcounter{page}{0}
\newpage

\section{Introduction}

In this paper, we present optimal adjacency labelling schemes (equivalently, induced-universal graph
constructions) for subgraphs of Cartesian products, which essentially closes a recent line of work
studying these objects \cite{CLR20, Har20, AAL21, HWZ21, AAA+22, EHK22}. To do so, we introduce a
few new techniques for designing adjacency labelling schemes.

\paragraph*{Adjacency labeling.}
A \emph{class} of graphs is a set $\cF$ of graphs closed under isomorphism, where the set $\cF_n
\subseteq \cF$ of graphs on $n$ vertices has vertex set $[n]$. It is \emph{hereditary} if it is also
closed under taking induced subgraphs, and \emph{monotone} if it is also closed under taking
subgraphs. An \emph{adjacency labeling scheme} for a class $\cF$ consists of a \emph{decoder} $D :
\zo^* \times \zo^* \to \zo$ such that for every $G \in \cF$ there exists a \emph{labeling} $\ell :
V(G) \to \zo^*$ satisfying
\[
  \forall x,y \in V(G) \,:\qquad D(\ell(x), \ell(y)) = 1 \iff xy \in E(G) \,.
\]
The \emph{size} of the adjacency labeling scheme (or \emph{labeling scheme} for short) is the
function $n \mapsto \max_{G \in \cF_n} \max_{x \in V(G)} |\ell(x)|$, where $|\ell(x)|$ is the number
of bits of $\ell(x)$. Labeling schemes have been studied extensively since their introduction by
Kannan, Naor, \& Rudich \cite{KNR92} and Muller \cite{Mul89}. If $\cF$ admits a labeling scheme of
size $s(n)$, then a graph $G \in \cF_n$ can be recovered from the $n \cdot s(n)$ total bits in the
adjacency labels of its vertices, so a labeling scheme is an encoding of the graph, distributed
among its vertices. The information-theoretic lower bound on any encoding is $\log |\cF_n|$, so the
question is, when can the distributed adjacency labeling scheme approach this bound? In other words,
which classes of graphs admit labeling schemes of size $O(\tfrac{1}{n} \log |\cF_n|)$? We will say
that a graph class has an \emph{efficient} labeling scheme if it either has a labeling scheme of
size $O(1)$ (\ie it satisfies $\log |\cF_n| = o(n \log n)$ \cite{Sch99}), or $O(\tfrac{1}{n} \log
|\cF_n|)$.
%Most
%research has focused on the case $\log |\cF_n| = O(n \log n)$ (called the \emph{factorial classes}
%\cite{??}) where we might demand labels of size $O(\log n)$, within a constant factor of
%the number of bits required uniquely label each vertex. It was suggested in \cite{KNR92} and
%conjectured in \cite{Spin03} that \emph{all} such classes admit these short labels, but this was
%refuted recently by Hatami \& Hatami \cite{HH21}.

%\newcommand{\her}{\mathsf{cl}_{her}}
%\newcommand{\mon}{\mathsf{cl}_{mon}}
\newcommand{\her}{\mathsf{her}}
\newcommand{\mon}{\mathsf{mon}}
\paragraph*{Cartesian products.}
Write $G \square H$ for the Cartesian product of $G$ and $H$, write $G^d$ for the $d$-wise Cartesian
product of $G$, and for any class $\cF$ write $\cF^\square = \{ G_1 \square G_2 \square \dotsm
\square G_d : d \in \bN, G_i \in \cF \}$ for the class of Cartesian products of graphs in $\cF$. A
vertex $x$ of $G_1 \square \dotsm \square G_d$ can be written $x = (x_1,\dotsc,x_d)$ where $x_i \in
V(G_i)$ and two vertices $x,y$ are adjacent if and only if they differ on exactly one coordinate $i
\in [d]$, and on this coordinate $x_iy_i \in E(G_i)$. Write $\mon(\cF^\square)$ and
$\her(\cF^\square)$, respectively, for the monotone and hereditary closures of this class, which are
the sets of all graphs $G$ that are a subgraph (respectively, induced subgraph) of some $H \in
\cF^\square$. 

The main result of this paper is to construct optimal labeling schemes for $\mon(\cF^\square)$ and
$\her(\cF^\square)$ from an optimal labeling scheme for $\cF$.  Cartesian products appear several
times independently in the recent literature on labeling schemes \cite{CLR20,Har20,AAL21} (and later
in \cite{HWZ21,AAA+22,EHK22}), and are extremely natural for the problem of adjacency labeling for a
few reasons.

First, for example, if $\cF$ is the class of complete graphs, a labeling scheme for
$\her(\cF^\square)$ is equivalent to an encoding $\ell : T \to \zo^*$ of strings $T \subseteq
\Sigma^*$, with $\Sigma$ being an arbitrarily large finite alphabet, such that a decoder who doesn't
know $T$ can decide whether $x,y \in T$ have Hamming distance 1, using only the encodings $\ell(x)$
and $\ell(y)$.  Replacing complete graphs with, say, paths, one obtains induced subgraphs of grids
in arbitrary dimension.  Switching to $\mon(\cF^\square)$ allows arbitrary edges of these products
to be deleted.

Second, Cartesian product graphs admit by definition a natural but inefficient ``implicit
representation''\!\!, meaning (informally) that the adjacency between two vertices $x$ and $y$ can
be verified by examining their representation (in this case, the tuples $x = (x_1,\dotsc,x_d)$ and
$y = (y_1, \dotsc, y_d)$).  Formalizing and quantifying this general notion was the motivation for
labeling schemes \cite{KNR92}. \cite{KNR92} observed that adjacency labeling schemes are equivalent
to \emph{induced-universal graphs} (or simply \emph{universal graphs}). A sequence of graphs
$(U_n)_{n \in \bN}$ are universal graphs of size $n \mapsto |U_n|$ for a class $\cF$ if each
$n$-vertex graph $G \in \cF$ is an induced subgraph of $U_n$. A labeling scheme of size $s(n)$ is
equivalent to a universal graph of size $2^{s(n)}$. If $(U_n)_{n \in \bN}$ are universal graphs for $\cF$ then for
large enough $d = d(n)$, the graphs $(U_n^d)_{n \in \bN}$ are universal for $\her(\cF^\square)$, but in general
this construction has exponential size: the hypercubes $K_2^d$ are themselves universal for
$\her(\{K_2\}^\square)$, but a star with $n-1$ leaves cannot be embedded in $K_2^d$ for $d < n-1$,
so these universal graphs are of size at least $2^{n-1}$. It is not clear \emph{a priori} whether it
is possible to use the universal graphs for the base class $\cF$ to obtain more efficient universal
graphs for $\her(\cF^\square)$, and even less clear for $\mon(\cF^\square)$.

Finally, there was the possibility that \emph{subgraphs} of Cartesian products could provide the
first explicit counterexample to the Implicit Graph Conjecture (IGC) of \cite{KNR92,Spin03}, which
suggested that the condition $\log |\cF_n| = O(n \log n)$ was \emph{sufficient} for $\cF$ to admit a
labeling scheme of size $O(\log n)$; this was refuted by a non-constructive argument in a recent
breakthrough of Hatami \& Hatami \cite{HH21}.  There is a labeling scheme of size $O(\log^2 n)$ for
the subgraphs of hypercubes, due to a folklore bound of $\log n$ on the degeneracy of this class
(see \cite{Gra70}) and a general $O(\dgn \log n)$ labeling scheme for classes of degeneracy $\dgn$
\cite{KNR92}.  Designing an efficient labeling scheme for \emph{induced} subgraphs of hypercubes
(rather, the weaker question of proving bounds on $|\cF_n|$ for this family) was an open problem of
Alecu, Atminas, \& Lozin \cite{AAL21}, resolved concurrently and independently in \cite{Har20} using
a probabilistic argument; this also gave an example of a class with an efficient labeling scheme but
unbounded \emph{functionality}, answering another open question of \cite{AAL21}.  Also
independently, Chepoi, Labourel, \& Ratel \cite{CLR20} studied the structure of general Cartesian
products, motivated by the problem of designing labeling schemes for the classes
$\mon(\cF^\square)$. They give upper bounds (via bounds on the degeneracy) for a number of special
cases but do not improve on the $O(\log^2 n)$ bound for hypercubes.

It is shown in \cite{EHK22} that, while \emph{induced} subgraphs of hypercubes have a constant-size
\emph{adjacency sketch} (a probabilistic version of a labeling scheme), the \emph{subgraphs} of
hypercubes do not. This gave a natural counterexample to a conjecture of \cite{HWZ21}, whose earlier
refutation by a more specialized construction \cite{HHH21a} led to the refutation of the IGC
\cite{HH21}. Since (1) subgraphs and induced subgraphs of hypercubes are significantly different in
the sketching model, (2) the earlier refutation of the \cite{HWZ21} conjecture led to a refutation
of the IGC, and (3) the previous  work considering Cartesian products \cite{CLR20, Har20, HWZ21,
AAL21, AAA+22} had not improved on the trivial $O(\log^2 n)$ bound for subgraphs, \cite{EHK22} asked
whether subgraphs of Cartesian products could give the first explicit counterexample to the IGC.
Alas, we find that this is not so.

\paragraph*{Results and techniques.}
We improve the best-known $O(\log^2 n)$ bound for subgraphs of hypercubes to the optimal $O(\log
n)$, and in general show how to construct optimal labels for all subgraphs and induced subgraphs of
Cartesian products. Our proof is short, and departs significantly from standard techniques in the
field of labeling schemes: we do not rely on any structural results, graph width parameters, or
decompositions, and instead use communication complexity (as in \cite{Har20,HWZ21}), encoding,
hashing arguments, and a construction from additive combinatorics, all of which may be useful for future work on labeling schemes.  We prove:

\begin{theorem}
\label{thm:main}
Let $\cF$ be a hereditary class with an adjacency labeling scheme of size $s(n)$. Then:
\begin{enumerate}[itemsep=0pt]
\item $\her(\cF^\square)$ has a labeling scheme of size at most $2s(n) + O(\log n)$.
\item $\mon(\cF^\square)$ has a labeling scheme of size at most $2s(n) + O(\dgn(n) + \log n)$, where $\dgn(n)$ is the maximum degeneracy of any $n$-vertex graph in $\mon(\cF^\square)$.
%where $k(G)$ is the degeneracy of $G$.
\end{enumerate}
\end{theorem}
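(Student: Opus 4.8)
The plan is to put each graph into a normal form and then label every vertex by its position in a spanning tree together with the adjacency-labels, inside the individual factors, of a bounded number of the coordinate-values it uses, so that the decoder can recover the (unique) coordinate in which a candidate edge would lie and test adjacency there with $\cF$'s decoder. Distinct components of a graph in $\her(\cF^\square)$ are again in $\her(\cF^\square)$, so a leading $O(\log n)$-bit component identifier reduces part~1 to the connected case; fix $H \subseteq_{\mathrm{ind}} G_1\square\cdots\square G_d$ with $H$ connected and $G_i\in\cF$. Replacing each $G_i$ by its induced subgraph on the values actually taken in coordinate~$i$ keeps $G_i\in\cF$ (heredity), keeps $H$ induced in the product, and makes $|V(G_i)|\le n$, so $\cF$'s scheme labels $G_i$ with $\le s(n)$ bits; write $\ell_i$ for this labeling and $D$ for $\cF$'s fixed decoder. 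Any coordinate that is constant on $V(H)$ may be projected out without changing $H$ or the product structure, and each edge of a spanning tree of $H$ changes exactly one coordinate, so at most $n-1$ coordinates survive: we may assume $d\le n-1$, so in particular a coordinate index costs only $O(\log n)$ bits.

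\textbf{The scheme and the decoder.} Fix a depth-first spanning tree $T$ of $H$ rooted at some $r$; for a non-root $v$ let $c(v)$ be the coordinate changed along the edge from $v$ to its parent $p(v)$, so $v$ and $p(v)$ agree off $c(v)$ and $v_{c(v)}\sim_{G_{c(v)}}p(v)_{c(v)}$. I would label $v$ with (i) $O(\log n)$ bits of \emph{addressing}: a unique identifier of $v$, that of $p(v)$, the index $c(v)$, and DFS-interval data letting the decoder recognize the ancestor--descendant relation and, for an ancestor $a$ of $v$, locate the coordinate in which a back edge $av$ could lie; and (ii) the $\cF$-labels, each of size $\le s(n)$, of a constant number of coordinate-values of $v$ --- at minimum the pair $\ell_{c(v)}(v_{c(v)}),\ell_{c(v)}(p(v)_{c(v)})$, together with an analogous pair in one further distinguished coordinate attached to $v$, so at most $4s(n)$ bits in all. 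The decoder, given two labels, uses the addressing to decide whether the pair can be adjacent, and if so to pin down the single coordinate $i$ in which the two vertices would differ, and then runs $D$ on the two stored $\cF$-labels for coordinate~$i$. Tree edges are settled immediately from the addressing (a pair is a tree edge iff one vertex is the other's recorded parent).

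\textbf{The crux: non-tree edges.} In a DFS tree every non-tree edge $uv$ joins an ancestor $u$ to a descendant $v$, and because $u,v$ differ in exactly one coordinate $i$ with $u_i\sim_{G_i}v_i$, the $T$-path from $u$ down to $v$ is severely constrained: in each coordinate $j\ne i$ it is a closed walk of $G_j$ returning to $u_j=v_j$, and in coordinate $i$ it is a walk of $G_i$ from $u_i$ to $v_i$. I expect the main technical work to be a structural analysis of this path showing that, modulo the $O(\log n)$ addressing, the coordinate $i$ and the values $u_i,v_i$ are already pinned down by a bounded amount of data stored at $u$ and at $v$ --- for instance, the coordinate and endpoint-values of the topmost tree edge lying below $u$ on the path --- so that $D$ correctly certifies $G_i$-adjacency. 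Making this precise is where the constant $4$ is spent and where the Cartesian-product structure is indispensable: an induced subgraph of an arbitrary graph admits no such control over its non-tree edges, which is exactly why the trivial $O(\log^2 n)$ bound for hypercubes had resisted improvement. Together with the previous paragraph this yields labels of size $4s(n)+O(\log n)$, proving part~1.

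\textbf{Subgraphs.} For $G\in\mon(\cF^\square)$, let $H$ be the induced subgraph of the ambient product on $V(G)$, so $H\in\her(\cF^\square)$ and $G$ is a spanning subgraph of $H$; applied to $H$, the part-1 scheme already resolves every pair either to ``non-adjacent in $H$'' (hence in $G$) or to ``an edge of $H$'', and it remains to mark which $H$-edges survive in $G$. Fix a degeneracy order $v_1,\dots,v_n$ of $G$, so $v_t$ has at most $k(G)$ neighbours in $G$ among $v_1,\dots,v_{t-1}$; I would augment the addressing so that, for a pair flagged as an $H$-edge, the decoder learns which endpoint is the later one and the rank of the earlier endpoint in a canonical ordering (derived from the part-1 labels) of the later vertex's earlier $H$-neighbours, and then store at each vertex a bitstring of length $O(k(G))$ recording which of its earlier $H$-neighbours are genuine $G$-neighbours. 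The reason this costs $O(k(G))$ rather than the obvious $O(k(G)\log n)$ is that a surviving neighbour is named by its rank in this canonical list rather than by a full identifier, and arranging the canonical order so that this compression goes through is the remaining obstacle for part~2; the total is $4s(n)+O(k(G)+\log n)$.
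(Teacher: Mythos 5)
Your proposal takes a genuinely different route from the paper, and the step you yourself flag as the crux is a real gap that I do not think can be filled as stated.

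The paper's central device is an \emph{XOR-labeling scheme} (its Lemma~\ref{lemma:xor}): any adjacency labeling scheme for $\cF$ of size $s(n)$ can be converted, at a factor-$4$ cost, into one whose decoder depends only on $\ell(x)\oplus\ell(y)$. Then each vertex $x=(x_1,\dots,x_d)$ of $G\subset_I G_1\square\cdots\square G_d$ stores the \emph{single} $4s(n)$-bit string $\bigoplus_{j}\ell_j(x_j)$. When $x$ and $y$ differ only in coordinate $i$, the XOR of their stored strings collapses to $\ell_i(x_i)\oplus\ell_i(y_i)$ because all other terms cancel, so the decoder need not know $i$ at all. This is precisely what makes the ``which coordinate?'' problem evaporate, and it is the idea missing from your proposal.

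Your plan instead stores the $\cF$-labels of a \emph{constant} number of coordinate-values per vertex (roughly the coordinate of the tree edge to the parent, plus ``one further distinguished coordinate'') and hopes that the DFS structure pins down, for every back edge, which coordinate to test and that the relevant $\cF$-labels are among the stored ones. This fails: a single vertex can be the lower endpoint of back edges to many distinct ancestors, each differing from it in a \emph{different} coordinate, and the root/ancestors need not carry those coordinates either. Concretely, take $\cF$ containing $K_3$ with vertex set $\{a,b,c\}$, and inside $K_3^{\,d}$ let $v=(a,\dots,a)$, $u^{(j)}$ equal to $v$ with a $b$ in position $j$, and $w^{(j)}$ equal to $v$ with $b$'s in positions $j$ and $j+1$. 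The graph $H$ induced on $\{v\}\cup\{u^{(j)}\}\cup\{w^{(j)}\}$ contains the path $u^{(1)}w^{(1)}u^{(2)}w^{(2)}\cdots u^{(d)}$ together with the edges $vu^{(j)}$ for all $j$. A DFS from $u^{(1)}$ can produce the tree path $u^{(1)}-w^{(1)}-\cdots-u^{(d)}-v$, and then $v$ has back edges to $u^{(1)},\dots,u^{(d-1)}$, lying in coordinates $1,\dots,d-1$ respectively. The tree edge at $v$ is in coordinate $d$, the tree edge below $u^{(1)}$ is in coordinate $2$, and more generally the coordinates of the nearby tree edges never coincide with the back-edge coordinate. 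So neither endpoint of the back edge $vu^{(1)}$ stores $\ell_1(\cdot)$, and no bounded store at each vertex can cover $\Theta(d)$ distinct coordinates. There is no structural theorem here to be proved; the claim is false.

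Two further points. First, the DFS addressing by itself does not decide whether an ancestor--descendant pair differs in exactly one coordinate; the paper handles this with a separate $O(\log n)$-bit probabilistic Hamming-distance sketch (\cref{prop:hamming-distance-binary}--\cref{prop:hamming-distance}), obtained by random bucketing of coordinates followed by a union bound. Your proposal would need an independent solution to this. Second, for part~2, your ``rank in a canonical ordering of earlier $H$-neighbours'' is not computable by a decoder that only sees the two labels of the queried pair, since the rank depends on the whole neighbour set; the paper instead stores at each vertex an explicit \emph{minimal perfect hash function} (\cref{thm:perfect-hashing}) on its forward $H$-neighbours, costing $O(k+\log\log n)$ bits, which makes the lookup local. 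You would need something equivalent, and the $O(k(G))$ bound you assert for the bitstring is not justified because $H$ may have many more edges than $G$.
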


\noindent
We allow $\cF$ to be finite, in which case $s(n) = O(1)$; in particular, setting $\cF = \{ K_2, K_1
\}$, we get the result for hypercubes:

\begin{corollary}
Let $\cH$ be the class of hypercube graphs. Then $\mon(\cH)$ has a labeling scheme of size $O(\log
n)$.
\end{corollary}

\noindent
All of the labeling schemes of Chepoi, Labourel, \& Ratel \cite{CLR20} are obtained by bounding the degeneracy $\dgn(G)$ of a graph $G$
and applying as a black-box the labeling scheme of size $O(\dgn(G) \cdot \log n)$ from \cite{KNR92}. For example, they
get labels of size $O(d \log^2 n)$ when the base class $\cF$ has degeneracy $d$, by showing that
$\mon(\cF^\square)$ has degeneracy $O(d \log n)$. Our result can be substituted for that black-box,
replacing the multiplicative $O(\log n)$ with an \emph{additive} $O(\log n)$, thereby improving all
of the results of \cite{CLR20} when combined with their bounds on $\dgn(G)$; for example, achieving
$O(d \log n)$ when $\cF$ has degeneracy $d$.

For subgraphs of hypercubes, \cite{CLR20} observed that a bound of $O(\mathsf{vc}(G) \log n)$
follows from the inequality $\dgn(G) \leq \mathsf{vc}(G)$ due to Haussler \cite{Hau95}, where
$\mathsf{vc}(G)$ is the VC dimension\footnote{See \cite{CLR20} for the definition of VC dimension},
which can be as large as $\log n$ but is often much smaller; they generalize this inequality in
various ways to other Cartesian products.  Our result supercedes the VC dimension result for
hypercubes.

\cref{thm:main} is optimal up to constant factors (which we have not tried to optimize), and yields
the following corollary (see \cref{section:optimality} for proofs).

\begin{corollary}
\label{cor:efficient}
If a hereditary class $\cF$ has an efficient labeling scheme, then so do $\her(\cF^\square)$ and
$\mon(\cF^\square)$.
\end{corollary}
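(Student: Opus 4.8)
The plan is to plug the labeling scheme of the efficient class $\cF$ into \cref{thm:main} and then absorb every additive term into $\tfrac1n\log|\cdot_n|$ of the target class. First I would dispose of a degenerate case: if no graph in $\cF$ has an edge, then $\cF$ (being hereditary) consists of edgeless graphs, so $\cF^\square$, $\her(\cF^\square)$ and $\mon(\cF^\square)$ are classes of edgeless graphs and there is nothing to prove. Otherwise $K_2\in\cF$, hence $\{K_2\}^\square\subseteq\cF^\square$; since every tree is an induced subgraph of a hypercube, both $\her(\cF^\square)$ and $\mon(\cF^\square)$ contain all trees, so by Cayley's formula $\log|\her(\cF^\square)_n|$ and $\log|\mon(\cF^\square)_n|$ are both at least $(n-2)\log n=\Omega(n\log n)$. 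Write $\cH$ for either of these classes. As $\cF\subseteq\cH$ we have $\log|\cF_n|\le\log|\cH_n|$, and efficiency of $\cF$ gives it a labeling scheme of size $s(n)=O(1+\tfrac1n\log|\cF_n|)=O(\tfrac1n\log|\cH_n|)$, the last step using $\tfrac1n\log|\cH_n|=\Omega(\log n)$.

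Next I would invoke \cref{thm:main}. For $\her(\cF^\square)$, part~1 yields a labeling scheme of size $4s(n)+O(\log n)$, and both summands are $O(\tfrac1n\log|\her(\cF^\square)_n|)$, so $\her(\cF^\square)$ is efficient. For $\mon(\cF^\square)$, part~2 gives each $n$-vertex $G\in\mon(\cF^\square)$ a label of size $4s(n)+O(k(G)+\log n)$; here $4s(n)+O(\log n)=O(\tfrac1n\log|\mon(\cF^\square)_n|)$ as before, so the only thing left to prove is that $k(G)=O(\tfrac1n\log|\mon(\cF^\square)_n|)$ for every $n$-vertex $G\in\mon(\cF^\square)$.

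For this I would use that $\mon(\cF^\square)$ is closed under the Cartesian product: if $A$ is a subgraph of a product $P$ of $\cF$-graphs and $B$ of a product $P'$ of $\cF$-graphs, then $A\square B$ is a subgraph of $P\square P'$, again a product of $\cF$-graphs. Fix $n$-vertex $G$ with degeneracy $k=k(G)\ge1$ and let $W$ be its $k$-core — the maximal subgraph of minimum degree $\ge k$, which is nonempty, lies in $\mon(\cF^\square)$, and has $h\le n$ vertices. Let $t=\lceil\log_h n\rceil$, so $W^{\square t}\in\mon(\cF^\square)$ has $h^t\ge n$ vertices and minimum degree $\ge tk$. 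Restricting one coordinate to a subset of $V(W)$ of size $\lfloor n/h^{t-1}\rfloor$ produces a subgraph $B\subseteq W^{\square t}$ with at most $n$ and at least $n/2$ vertices and minimum degree $\ge(t-1)k\ge k$ (an elementary check, the only special case being $h=n$); thus $|E(B)|\ge nk/4$. Padding $B$ to exactly $n$ vertices with isolated vertices keeps it in $\mon(\cF^\square)$ (the ambient product has at least $n$ vertices), and taking all subsets of its edge set yields $2^{|E(B)|}\ge2^{nk/4}$ distinct $n$-vertex graphs in $\mon(\cF^\square)$, so $k\le\tfrac4n\log|\mon(\cF^\square)_n|$, as needed.

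The hard part is this degeneracy bound. A graph of large degeneracy need not be dense — $K_{k+1}$ together with isolated vertices is $k$-degenerate with only $\binom{k+1}{2}$ edges — so one cannot lower-bound $|\mon(\cF^\square)_n|$ directly from the presence of a single high-degeneracy member. The rescue is that $\mon(\cF^\square)$ is closed under Cartesian powers, so such a member can be amplified to $W^{\square t}$, whose minimum degree grows linearly in $t$; the slightly delicate point is choosing $t$ and the ``sub-box'' so that the amplified graph has about $n$ vertices while its minimum degree stays $\Omega(k)$.
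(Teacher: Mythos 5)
Your argument is correct, and at the top level it follows the paper's intended derivation: feed $\cF$'s labels into \cref{thm:main}, absorb the $s(n)$ term using $\cF\subseteq\cH$, absorb the $\log n$ term using the presence of hypercubes (you use trees and Cayley's formula, the paper relies implicitly on the same $\Omega(n\log n)$ count), and absorb the degeneracy term via a $2^{\Omega(nk(n))}$ lower bound on $|\mon(\cF^\square)_n|$. The paper packages this last step as \cref{prop:monotone-degeneracy}, which also starts from a min-degree-$\ge k$ induced subgraph $G'$ and amplifies it through a Cartesian power, but picks a different substructure: it takes the $d$ ``axes'' $V_1\cup\dotsm\cup V_d$ through a fixed basepoint of $(G')^d$ (with a separate easy case when $|V(G')|\geq n/2$), whereas you take a box $W[S]\square W^{\square(t-1)}\subset_I W^{\square t}$ and tune $t$ and $|S|$ so the box has between $n/2$ and $n$ vertices. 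Both yield $\Omega(nk)$ edges on at most $n$ vertices and are then padded with isolated vertices; your slab construction avoids the paper's case split at the price of a bit of floor/ceiling bookkeeping, but the two are morally the same amplification. The only point worth tightening in your writeup is the claim that the min degree of the box is ``$\geq(t-1)k\geq k$'': this needs $t\geq 2$, and you correctly flag $t=1$ (i.e.\ $h=n$, where $W=G$ already has min degree $\geq k$) as the special case, so the argument goes through.
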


One of our main motivations was to find explicit counterexamples to the IGC; a consequence of the
above corollary is that, counterexamples to the IGC cannot be obtained by taking the monotone
closure of Cartesian products of some hereditary class $\cF$, unless $\cF$ itself is already a
counterexample.  This leaves open the problem of finding an explicit counterexample to the IGC,
which requires finding the first lower-bound technique for adjacency labelling schemes.

Finally, we note that the encoders and decoders for the labeling schemes in \cref{thm:main}
are efficient procedures, and some of the steps in the encoders are randomized.
We elaborate on this more in \cref{section:complexity}.

\section{Adjacency Labeling Scheme}

\paragraph*{Notation.}
For two binary strings $x,y$, we write $x \oplus y$ for the bitwise XOR.  For two graphs $G$ and
$H$, we will write $G \subset H$ if $G$ is a subgraph of $H$, and $G \subset_I H$ if $G$ is an
\emph{induced} subgraph of $H$. We will write $V(G)$ and $E(G)$ as the vertex and edge set of a
graph $G$, respectively. All graphs in this paper are simple and undirected. 
The \emph{degeneracy} of a graph $G$ is the minimum integer $\dgn$
such that all subgraphs of $G$ have a vertex of degree at most $\dgn$.
%A graph $G$ has \emph{degeneracy} $k$ if all subgraphs of $G$ have a vertex of degree at most $k$.

\paragraph*{Strategy.}
Suppose $G \subset G_1 \square \dotsm \square G_d$ is a subgraph of a Cartesian product. Then $V(G)
\subseteq V(G_1) \times \dotsm \times V(G_d)$. Let $H \subset_I G_1 \square \dotsm \square G_d$ be
the subgraph induced by $V(G)$, so that $E(G) \subseteq E(H)$. One may think of $G$ as being
obtained from the induced subgraph $H$ by deleting some edges. Then two vertices $x,y \in V(G)$ are
adjacent if and only if:
\begin{enumerate}[itemsep=0pt]
\item There exists exactly one coordinate $i \in [d]$ where $x_i \neq y_i$;
\item On this coordinate, $x_iy_i \in E(G_i)$; and,
\item The edge $xy \in E(H)$ has not been deleted in $E(G)$.
\end{enumerate}
We construct the labels for vertices in $G$ in three phases, which check these conditions
in sequence.

\subsection{Phase 1: Exactly One Difference}

We give two proofs for Phase 1. The first is a reduction to the $k$-Hamming Distance communication
protocol. The second proof is direct and self-contained; it is an extension of the proof of the
labeling scheme for induced subgraphs of hypercubes, in the unpublished note \cite{Har22} (adapted
from \cite{Har20,HWZ21}). In both cases the labels are obtained by the probabilistic method, and are
efficiently computable by a randomized algorithm. 

For any alphabet $\Sigma$ and any two strings $x,y \in \Sigma^d$ where $d \in \bN$, write
$\dist(x,y)$ for the Hamming distance between $x$ and $y$, \ie $\dist(x,y) = |\{i \in [d] : x_i \neq
y_i \}|$.  

For the first proof, we require a result in communication complexity (which we translate into our
terminology). A version with two-sided error appears in \cite{Yao03}, the one-sided error version
below is implicit in \cite{HWZ21} (and may appear elsewhere in the literature, which we did not
find).

\begin{theorem}[\cite{Yao03, HWZ21}]
\label{prop:yao}
There exists a constant $c > 0$ satisfying the following.
For any $k \in \bN$, there exists a function $D : \zo^* \times \zo^* \to \zo$
such that, for any $d \in \bN$ and set $S \subseteq \zo^d$ of size $|S|=n$, there exists a
probability distribution $L$ over functions
$\ell : S \to \zo^{ck^2}$, where for all $x,y \in S$,
\begin{enumerate}
\item If $\dist(x,y) \leq k$ then $\Pru{\ell \sim L}{D(\ell(x), \ell(y)) = 1} = 1$; and,
\item If $\dist(x,y) > k$ then $\Pru{\ell \sim L}{D(\ell(x), \ell(y)) = 0} \geq 2/3$.
\end{enumerate}
\end{theorem}

\noindent
We transform these randomized labels into deterministic labels using standard arguments:
\newcommand{\enc}{\mathsf{enc}}
\begin{proposition}
\label{prop:alt-hamming-distance}
There exists a constant $c > 0$ satisfying the following.
For any $k \in \bN$, there exists a function $D : \zo^* \times \zo^* \to \zo$
such that, for any $d \in \bN$ and set $S \subseteq \zo^d$ of size $|S|=n$, there exists a function
$\ell : S \to \zo^{ck^2\log n}$ where for all $x,y \in S$, $D(\ell(x), \ell(y)) = 1$ if and only if
$\dist(x,y) \leq k$.
\end{proposition}
\begin{proof}
Let $D' : \zo^* \times \zo^* \to \zo$, $c > 0$, and $L$ be the function, the constant, and the
probability distribution given for $S$ by \cref{prop:yao}. 
Let $q = \lceil 2 \log_3 n \rceil$, and let $L'$ be the distribution over functions defined by
choosing $\ell_1, \dotsc, \ell_q \sim L$ independently at random, and setting $\ell(x) = (\ell_1(x),
\ell_2(x), \dotsc, \ell_q(x))$ for each $x \in S$.  Define $D : \zo^* \times \zo^* \to \zo$ such
that
\[
  D(\ell(x), \ell(y)) = \bigwedge_{i=1}^q D'(\ell_i(x), \ell_i(y)) \,.
\]
Observe that, if $x,y \in S$ have $\dist(x,y) \leq k$ then $\Pr{ D(\ell(x), \ell(y)) = 1 } = 1$
since for each $i \in [q]$ we have $\Pr{ D'(\ell_i(x), \ell_i(y)) = 1 } = 1$.
On the other hand, if $x,y \in S$ have $\dist(x,y) > k$, then
\[
  \Pr{ D(\ell(x), \ell(y)) = 1} < (1/3)^q \leq 1/n^2 \,.
\]
By the union bound, the probability that there exist $x,y \in S$ such that $D(\ell(x), \ell(y))$
takes the incorrect value is strictly less than 1. Therefore there exists a fixed function $\ell : S
\to \zo^{c k^2 q}$ satisfying the required conditions, where $c k^2 q = C k^2 \log n$ for an
appropriate constant $C$.
\end{proof}

\noindent
We reduce the problem for alphabets $\Sigma$ to the 2-Hamming Distance labeling problem above.

\begin{lemma}
\label{prop:hamming-distance}
There exists a function $D : \zo^* \times \zo^* \to \zo$ and a constant $c>0$ such that, for any
countable alphabet $\Sigma$, any $d \in \bN$, and any set $S \subseteq \Sigma^d$ of size $|S|=n$,
there exists a function $\ell : S \to \zo^k$ for $k \le c\log n$, where $D(\ell(x), \ell(y)) = 1$ if
and only if $\dist(x,y) = 1$.
\end{lemma}
\begin{proof}
Since $\lceil \log n \rceil$ bits can be added to any $\ell(x)$ to ensure that $\ell(x)$ is unique,
it suffices to construct functions $D, \ell$ where $D(\ell(x), \ell(y)) = 1$ if and only if
$\dist(x,y) \leq 1$, instead of $\dist(x,y) = 1$ exactly.

Since $S$ has $n$ elements, we may assume that $\Sigma$ has a finite number $N$ of elements,
since we may reduce to the set of elements which appear in the strings $S$. We may then identify
$\Sigma$ with $[N]$ and define an encoding $\mathsf{enc} : [N] \to \zo^N$ where for any $\sigma \in
[N]$, $\mathsf{enc}(\sigma)$ is the string that takes value 1 on coordinate $\sigma$, and all other
coordinates take value 0.

Abusing notation, for any $x \in \Sigma^d$, we may now define the concatenated encoding $\enc(x) =
\enc(x_1) \circ \enc(x_2) \circ \dotsm \circ \enc(x_d)$, where $\circ$ denotes concatenation.  It is
easy to verify that for any $x,y \in \Sigma^d$, $\dist(\enc(x), \enc(y)) = 2 \cdot \dist(x,y)$. We
may therefore apply \cref{prop:alt-hamming-distance} with $k=2$ on the set $S' = \{ \enc(x) : x \in
S \}$ to obtain a function $D : \zo^* \times \zo^* \to \zo$, a constant $C > 0$, and a function
$\ell' : S' \to \zo^{C \log n}$ such that for all $x,y \in S$,
\[
  D(\ell'(\enc(x)), \ell'(\enc(y))) = 1 \iff \dist(\enc(x), \enc(y)) \leq 2 \iff \dist(x,y) \leq 1
\,.
\]
We may then conclude the proof by setting $\ell(x) = \ell'(\enc(x))$ for each $x \in S$.
\end{proof}

\noindent
Below, we give an alternative, direct proof that does not reduce to $k$-Hamming Distance.
\begin{proposition}
\label{prop:hamming-distance-binary}
For any set $S \subseteq \zo^d$, there exists a random function $\ell : S \to \zo^4$
such that, for all $x,y \in S$,
\begin{enumerate}[itemsep=0pt]
\item[(1)] If $\dist(x,y) \leq 1$, then $\Pru{\ell}{\dist(\ell(x),\ell(y)) \leq 1 } = 1$, and
\item[(2)] If $\dist(x,y) > 1$, then $\Pru{\ell}{\dist(\ell(x),\ell(y)) \leq 1 } \leq 3/4$.
\end{enumerate}
\end{proposition}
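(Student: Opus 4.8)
The plan is to let $\ell$ be given by four random parity bits. I would define $\ell$ on all of $\zo^d$ (then restrict to $S$) as follows: draw a uniformly random map $c : [d] \to \{1,2,3,4\}$, with $c(1),\dots,c(d)$ mutually independent, and for $x \in \zo^d$ and $k \in \{1,2,3,4\}$ let the $k$-th bit of $\ell(x)$ be $\bigoplus_{i : c(i) = k} x_i$, the parity of the coordinates of $x$ in class $k$. Identifying $\zo^4$ with $\bF_2^4$ and writing $e_j$ for the $j$-th unit vector, this $\ell$ is $\bF_2$-linear; in particular, if $\Delta = \{ i \in [d] : x_i \neq y_i \}$, then $\ell(x) \oplus \ell(y) = \bigoplus_{i \in \Delta} e_{c(i)}$, so $\dist(\ell(x),\ell(y))$ equals the number of classes containing an odd number of coordinates of $\Delta$.

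The first property is then immediate and holds for \emph{every} choice of $c$, hence with probability $1$: if $\dist(x,y) \le 1$ then $|\Delta| \le 1$, so $\ell(x) \oplus \ell(y)$ is either $0$ or a single unit vector, which has Hamming weight at most $1$. All the work is in the second property.

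For the second property, assume $|\Delta| \ge 2$ and fix two distinct coordinates $a, b \in \Delta$. I would write $\ell(x) \oplus \ell(y) = W \oplus e_{c(a)} \oplus e_{c(b)}$, where $W := \bigoplus_{i \in \Delta \setminus \{a,b\}} e_{c(i)}$ depends only on the values $c(i)$ with $i \notin \{a,b\}$ and is therefore independent of the pair $(c(a), c(b))$. Averaging over the value of $W$, it suffices to show that for every fixed $w \in \bF_2^4$,
\[
  \Pr{ \dist(w \oplus e_{c(a)} \oplus e_{c(b)},\, 0) \le 1 } \;\le\; \tfrac34 .
\]
Here $e_{c(a)} \oplus e_{c(b)}$ is $0$ with probability $\tfrac14$ (exactly when $c(a) = c(b)$) and is otherwise uniform over the six weight-$2$ vectors of $\bF_2^4$. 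The event above forces $e_{c(a)} \oplus e_{c(b)}$ into the $5$-element translate $w \oplus \{0, e_1, e_2, e_3, e_4\}$; going through the five possibilities for the Hamming weight of $w$ shows the probability of this is largest when $\dist(w,0) = 1$, where the translate consists of the zero vector, one weight-$1$ vector, and three weight-$2$ vectors, giving $\tfrac14 + 3 \cdot \tfrac18 = \tfrac58$, and is strictly smaller in the other four cases. Hence the probability in the second property is at most $\tfrac58 \le \tfrac34$.

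The conceptual content is just the construction — four independent random parity classes — together with the reduction of the second property to a statement about a single random pair in $\{1,2,3,4\}^2$, obtained by freezing all but two coordinates of $\Delta$ and using independence. I expect the only friction to be the short case check at the end; one should also verify that two coordinates $a, b$ can always be chosen, which is precisely where $|\Delta| \ge 2$ enters.
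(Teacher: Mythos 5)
Your construction is exactly the one in the paper --- choose a uniformly random map $[d] \to [4]$, form the four parity classes, and output the four class-parities --- and your argument for the first property (via $\bF_2$-linearity of $\ell$) is the same as the paper's.

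Where you diverge is the second property. The paper models $\ell(x)\oplus\ell(y)$ as the terminal state of a $t$-step random walk on $\zo^4$ that starts at $\vec 0$ and flips a uniformly random one of the four coordinates at each step (one step per coordinate of $\Delta$), then bounds $\Pr[|w^{(t)}|\le 1]$ by conditioning on the last two steps: $\Pr[w^{(t-1)}=\vec 0]\le 1/4$ and $\Pr[|w^{(t)}|\le 1\mid |w^{(t-1)}|\ge 1]\le 1/2$, giving $3/4$. You instead freeze $c$ on $\Delta\setminus\{a,b\}$ to get a fixed offset $w$, observe that $e_{c(a)}\oplus e_{c(b)}$ is $0$ with probability $1/4$ and uniform over the six weight-$2$ vectors otherwise, and then directly compute $\Pr[e_{c(a)}\oplus e_{c(b)}\in w\oplus B_1]$ for each possible weight of $w$. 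Your case analysis is correct (the maximum, $5/8$, occurs at $|w|=1$), and the independence of $W$ from $(c(a),c(b))$ that you invoke is exactly right since the $c(i)$ are mutually independent. The two arguments are close in spirit (both isolate the ``last two'' coordinates), but yours avoids the stepwise Markov-chain bookkeeping, is arguably cleaner, and incidentally yields the sharper constant $5/8$ in place of $3/4$ --- not that the improvement matters downstream, since \cref{prop:hamming-distance} amplifies by repetition anyway.
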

\begin{proof}
Choose a uniformly random map $p : [d] \to [4]$ and partition $[d]$ into four sets $P_j =
p^{-1}(j)$. For each $i \in [4]$, define $ \ell(x)_i := \bigoplus_{j \in P_i} x_j $.

Let $x,y \in S$ and write $w = \ell(x) \oplus \ell(y)$. Note that $\dist(\ell(x),\ell(y)) = |w|$,
which is the number of 1s in $w$. If $\dist(x,y) = 0$ then $\dist(\ell(x),\ell(y)) =0\leq 1$. Now suppose $\dist(x,y) = 1$. For any choice of $p : [d] \to [4]$, one
of the sets $P_i$ contains the differing coordinate and will have $w_i = 1$, while the other three
sets $P_j$ will have $w_j = 0$, so
$\Pru{\ell}{ \dist(\ell(x),\ell(y)) \leq 1 } = 1$.

Now suppose $\dist(x,y) = t \geq 2$. We will show that $|w|\leq 1$ with probability at most $3/4$.
Note that $w$ is obtained by the random process where $\vec 0 = w^{(0)}, w=w^{(t)}$, and $w^{(i)}$
is obtained from $w^{(i-1)}$ by flipping a uniformly random coordinate.

Observe that, for $i \geq 1$, $\Pr{ w^{(i)} = \vec 0 } \leq 1/4$. This is because $w^{(i)}=\vec 0$
can occur only if $|w^{(i-1)}|=1$, so the probability of flipping the 1-valued coordinate is $1/4$.
If $|w^{(i-1)}| \geq 1$ then $\Pruc{}{ |w^{(i)}| \leq 1 }{ |w^{(i-1)}| \geq 1} \leq 1/2$ since
either $|w^{(i-1)}|=1$ and then $|w^{(i)}| = 0 \leq 1$ with probability $1/4$, or $|w^{(i-1)}| \geq
2$ and $|w^{(i)}| = 1$ with probability at most $1/2$. Then, for $t \geq 2$,
\begin{align*}
  \Pr{ |w^{(t)}| \leq 1 }
  &= \Pr{ w^{(t-1)} = \vec 0 } + \Pr{ |w^{(t-1)}| \geq 1 } \cdot \Pruc{}{ |w^{(t)}|=1 }{
     |w^{(t-1)}| \geq 1 } 
  \leq \frac{1}{4} + \frac{1}{2} = \frac{3}{4} \,. \qedhere
\end{align*}
\end{proof}

\begin{proposition}
\label{prop:hamming-distance-sigma}
There exists a function $D : \zo^{4} \times \zo^{4} \to \zo$ such that, for any countable alphabet,
$\Sigma$, any $d \in \bN$, and any $S \subseteq \Sigma^d$ of size $n=|S|$, there exists a random
function $\ell : S \to \zo^{4}$ such that, for all $x,y \in S$,
\begin{enumerate}[itemsep=0pt]
\item[(1)] If $\dist(x,y) \leq 1$, then $\Pru{\ell}{D(\ell(x), \ell(y)) = 1 } = 1$, and
\item[(2)] If $\dist(x,y) > 1$, then $\Pru{\ell}{D(\ell(x), \ell(y)) = 1 } \leq 15/16$.
\end{enumerate}
\end{proposition}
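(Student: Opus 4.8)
The plan is to reduce the general-alphabet statement to the binary case of \cref{prop:hamming-distance-binary} by independently hashing each alphabet symbol down to a single bit, separately in each coordinate. Concretely, for each coordinate $i \in [d]$ let $\Sigma_i = \{x_i : x \in S\}$ be the (finite) set of symbols occurring in coordinate $i$ among the strings of $S$, and draw bits $b_{i,\sigma} \in \zo$ uniformly and independently for all $i \in [d]$ and $\sigma \in \Sigma_i$. Define $\phi : S \to \zo^d$ by $\phi(x)_i = b_{i,x_i}$, and set $\ell := \ell' \circ \phi$, where $\ell' : \zo^d \to \zo^4$ is the random map from the proof of \cref{prop:hamming-distance-binary}, built with fresh independent randomness (recall that this construction does not depend on its input set). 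The decoder is the same universal one, $D(a,b) = 1$ iff $\dist(a,b) \le 1$.

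The one fact to record about the hash $\phi$ is that, for fixed $x,y \in S$ with $t := \dist(x,y)$, the random variable $\dist(\phi(x),\phi(y))$ is distributed as $\mathrm{Bin}(t,1/2)$. Indeed, every coordinate on which $x$ and $y$ agree also has $\phi(x)_i = \phi(y)_i$, while on each of the $t$ coordinates where $x_i \neq y_i$ we have $\phi(x)_i \neq \phi(y)_i$ with probability exactly $1/2$, and these $t$ events are independent since they involve disjoint collections of the bits $b_{i,\sigma}$. In particular $\dist(\phi(x),\phi(y)) \le t$ always, so $\dist(x,y) \le 1$ forces $\dist(\phi(x),\phi(y)) \le 1$, and then part 1 of \cref{prop:hamming-distance-binary} gives $\dist(\ell(x),\ell(y)) \le 1$ with probability $1$; this is item 1.

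For item 2, fix $x,y$ with $t := \dist(x,y) \ge 2$ and condition on $\phi$, so that $\phi(S)$ becomes a fixed subset of $\zo^d$ to which \cref{prop:hamming-distance-binary} applies with its own independent randomness. Writing $s := \dist(\phi(x),\phi(y))$, the two parts of \cref{prop:hamming-distance-binary} give $\Pr{D(\ell(x),\ell(y)) = 1 \mid \phi} = 1$ when $s \le 1$ and $\Pr{D(\ell(x),\ell(y)) = 1 \mid \phi} \le 3/4$ when $s \ge 2$. Taking expectation over $\phi$ and using $s \sim \mathrm{Bin}(t,1/2)$ yields $\Pr{D(\ell(x),\ell(y)) = 1} \le q_t + \tfrac{3}{4}(1 - q_t) = \tfrac{3}{4} + \tfrac{1}{4} q_t$, where $q_t := \Pr{\mathrm{Bin}(t,1/2) \le 1} = (t+1)2^{-t}$. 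Since $q_{t+1}/q_t = \tfrac{t+2}{2(t+1)} < 1$ for every $t \ge 1$, we have $q_t \le q_2 = \tfrac{3}{4}$ for all $t \ge 2$, and therefore $\Pr{D(\ell(x),\ell(y)) = 1} \le \tfrac{3}{4} + \tfrac{3}{16} = \tfrac{15}{16}$.

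The only genuine subtlety — and the step I would be most careful with — is the bookkeeping between the two independent sources of randomness: \cref{prop:hamming-distance-binary} is stated for a fixed subset of $\zo^d$, whereas $\phi(S)$ is random and possibly not injective. This is handled cleanly by conditioning on $\phi$ as above, after which the hypothesis of \cref{prop:hamming-distance-binary} holds verbatim; and a collision $\phi(x) = \phi(y)$ with $x \ne y$ only makes $\dist(\ell(x),\ell(y)) = 0$, which is already covered by the $s \le 1$ case. Everything else is a routine binomial estimate, and the worst case of the bound above occurs at $t = 2$, which is exactly what produces the constant $15/16$.
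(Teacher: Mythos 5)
Your proposal is correct and follows essentially the same route as the paper: hash each symbol to an independent random bit coordinate-wise, compose with the random map from \cref{prop:hamming-distance-binary}, and use the universal decoder $D(a,b) = \mathds{1}[\dist(a,b)\le 1]$. The only cosmetic difference is that you compute $\Pr[\mathrm{Bin}(t,1/2)\le 1]$ exactly and show it is maximized at $t=2$, whereas the paper simply lower-bounds $\Pr[\dist(p(x),p(y))\ge 2]\ge 1/4$ by looking at two of the differing coordinates; both give the same $15/16$.
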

\begin{proof}
For each $\sigma \in \Sigma$ and $i \in [d]$, generate an independently and uniformly random
bit $q_i(\sigma) \sim \zo$.  Then for each $x \in S$ define $p(x) = (q_1(x_1),
\dotsc, q_d(x_d)) \in \zo^d$ and $S' = \{ p(x) : x \in S \}$, and let $\ell'$ be the random
function $S' \to \zo^4$ guaranteed to exist by \cref{prop:hamming-distance-binary}. We define the
random function $\ell : S \to \zo^{4}$ as $\ell(x) = \ell'(p(x))$.
We define $D(\ell(x), \ell(y)) = 1$ if and only if $\dist(\ell'(p(x)), \ell'(p(y))) \leq 1$. 

Let $x,y \in S$. Assume first that $\dist(x,y) \leq 1$. By construction, we have $\dist(p(x),p(y)) \leq 1$. Thus, by \cref{prop:hamming-distance-binary} (1),
\[
\Pr{ D(\ell(x), \ell(y)) = 1} = \Pr{ \dist(\ell'(p(x)), \ell'(p(y))) \leq 1 } = 1 \,.
\]
%If $\dist(x,y) \leq 1$, so there is a unique $i \in [d]$ with $x_i \neq
%y_i$, then
%\[
%   \Pr{ \dist(p(x), p(y)) = 1 } = \Pr{ q_i(x_i) \neq
%q_i(y_i) } = \Pr{ \dist(p(x),p(y)) = 0 } = 1/2 \,,
%\]
%so $\Pr{ \dist(p(x), p(y)) \leq 1 } = 1$.  Then by \cref{prop:hamming-distance-binary},
%\[
%  \Pr{ D(\ell(x), \ell(y)) = 1} = \Pr{ \dist(\ell'(p(x)), \ell'(p(y))) \leq 1 } = 1 \,.
%\]

Suppose now that $\dist(x,y) \geq 2$. Then, there are distinct $i,i' \in [d]$ such that $x_i \neq y_i$ and $x_{i'} \neq y_{i'}$, and therefore,
%If $\dist(x,y) > 1$ so that there are distinct $i,i' \in [d]$ such that $x_i \neq y_i$ and $x_{i'} \neq y_{i'}$, then
\[
  \Pr{ \dist(p(x), p(y)) \geq 2 } \geq \Pr{ q_i(x_i) \neq q_i(y_i) 
    \wedge q_{i'}(x_{i'}) \neq q_{i'}(y_{i'}) } = 1/4 \,.
\]
Consequently, by the law of total probability and \cref{prop:hamming-distance-binary} (1) and (2), we have
\begin{align*}
  \Pr{ D(\ell(x), \ell(y)) = 1}
  &= \Pr{ \dist(\ell'(p(x)), \ell'(p(y))) \leq 1 } \\
  &= \Pr{ \dist(\ell'(p(x)), \ell'(p(y))) \leq 1 ~|~ \dist(p(x), p(y)) \leq 1} \cdot \Pr{\dist(p(x), p(y)) \leq 1} \\
  & + \Pr{ \dist(\ell'(p(x)), \ell'(p(y))) \leq 1 ~|~ \dist(p(x), p(y)) \geq 2} \cdot (1-\Pr{\dist(p(x), p(y))  \leq 1}) \\
  %&= \Pr{ \dist(p(x), p(y)) \leq 1 \vee \dist(\ell'(p(x)), \ell'(p(y))) \leq 1 } \\
  &\leq \Pr{ \dist(p(x), p(y)) \leq 1} + 3/4 \cdot (1-\Pr{ \dist(p(x), p(y)) \leq 1})\\
  %& + \Pr{ \dist(\ell'(p(x)), \ell'(p(y))) \leq 1 ~|~ \dist(p(x), p(y)) \geq 2} \cdot \Pr{\dist(p(x), p(y)) \geq 2}\\
  &=   \Pr{ \dist(p(x), p(y)) \leq 1} \cdot (1-3/4) + 3/4 \\
  &\leq 3/4 \cdot (1-3/4) + 3/4 =  15/16 \,. \qedhere
\end{align*}

\end{proof}

\noindent
The alternative proof of \cref{prop:hamming-distance} now concludes by using
\cref{prop:hamming-distance-sigma} with a nearly identical derandomization argument as in
\cref{prop:alt-hamming-distance}. We note that given our explicit descriptions of
the random functions $\ell$ in
\cref{{prop:hamming-distance-binary},prop:hamming-distance-sigma}, the
derandomization argument of \cref{prop:alt-hamming-distance} can be
made constructive and efficient using the method of conditional
expectations (so that the labels can be constructed deterministically
in time polynomial in $n$).

\subsection{Phase 2: Induced Subgraphs}

After the first phase, we are guaranteed that there is a unique coordinate $i \in [d]$ where $x_i
\neq y_i$. In the second phase we wish to determine whether $x_iy_i \in E(G_i)$. It is convenient to
have labeling schemes for the factors $G_1, \dotsc, G_d$ where we can XOR the labels together while
retaining the ability to compute adjacency.  Define an \emph{XOR-labeling scheme} the same as an
adjacency labeling scheme, with the restriction that for each $s \in \bN$ there is some function
$g_s : \zo^s \to \zo$ such that on any two labels $\ell(x), \ell(y)$ of size $s$, the decoder
outputs $D(\ell(x), \ell(y)) = g_s(\ell(x) \oplus \ell(y))$. We show that any labeling scheme can be
transformed into an XOR-labeling scheme with at most a constant-factor loss:

\begin{lemma}
\label{lemma:xor}
Let $\cF$ be any class of graphs with an adjacency labeling scheme of size $s(n)$. Then
$\cF$ admits an XOR-labeling scheme of size at most $4s(n)$.
\end{lemma}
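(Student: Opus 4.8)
The plan is to leave the ``decode adjacency'' logic of the given scheme untouched and only re-encode the labels: pre-compose the given labeling with a fixed injective function whose image is a \emph{Sidon set}, so that the bitwise XOR of two new labels reveals the unordered pair of old labels, from which adjacency can be recomputed with the original decoder.

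First I would normalise the given scheme so that, for every $n$, all labels share a common length $s \define s(n)$ (standard, at a cost of only a constant additive term which I will not track). Let $D$ be the given decoder. The only facts I use about $D$ are that for every valid labeling $\ell$ of a graph $G \in \cF_n$ and all $x,y \in V(G)$ we have $D(\ell(x),\ell(y)) = \ind{xy \in E(G)}$; consequently $D(\ell(x),\ell(x)) = 0$ (no loops) and $D(\ell(x),\ell(y)) = D(\ell(y),\ell(x))$ (undirected). So although $D$ need not be symmetric as a function on all of $\zo^*\times\zo^*$, it is symmetric and vanishes on the diagonal \emph{on the set of label-pairs that actually occur}, which is all we will ever query.

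The main step is: for each $s$ there is a fixed function $\ell_s'\colon \zo^s \to \zo^{4s}$ that is injective and for which the $2^{s-1}(2^s-1)$ values $\ell_s'(a)\oplus\ell_s'(b)$ over unordered pairs $a\neq b$ are pairwise distinct and all nonzero (a Sidon set). Granting this, the XOR-scheme is: a vertex $x$ of $G$ with old label $\ell(x)$ gets the new label $\ell_s'(\ell(x))$, of length $4s$; and the new decoder, on input $w$, outputs $0$ if $w = \vec 0$, and otherwise finds the unique pair $\{a,b\}$ with $a\neq b$ and $\ell_s'(a)\oplus\ell_s'(b) = w$ and outputs $D(a,b)$. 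This is one fixed function of $w$ (it reads $s = |w|/4$ off the length), all new labels of a given graph share the length $4s$, and correctness is immediate from the previous paragraph: if $\ell(x) = \ell(y)$ then $w = \vec 0$ and $xy\notin E(G)$, so we output $0$; if $\ell(x)\neq\ell(y)$ then $w\neq\vec 0$ by injectivity, the pair found is exactly $\{\ell(x),\ell(y)\}$ by the Sidon property, and $D(a,b) = \ind{xy\in E(G)}$ by symmetry. Hence $\cF$ gets an XOR-labeling scheme of size $4s(n)$.

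Finally, $\ell_s'$ is produced by the probabilistic method: take $\ell_s'\colon\zo^s\to\zo^{4s}$ uniformly at random. Injectivity fails for a fixed $a\neq b$ with probability $2^{-4s}$, and for two fixed distinct unordered pairs the Sidon condition fails with probability $2^{-4s}$ (the two pairs are either disjoint or share exactly one element; in both cases the bad event is a single nontrivial $\mathbb{F}_2$-linear relation among independent uniform strings). There are fewer than $2^{4s}$ such configurations, so a union bound gives a good $\ell_s'$; fix one. I expect the real obstacle to lie not in this estimate but in the bookkeeping: checking that the decoder can be taken independent of $G$ and $n$, that labels within one graph share a length (hence the normalisation), and that the possible asymmetry of $D$ is harmless because the new decoder is only ever applied to label-pairs from a common graph. (An explicit construction is also available, e.g.\ an injective Sidon map of the form $t\mapsto(t,t^3)$ over a suitable field $\mathbb{F}_{2^k}$ with $\gcd(3,2^k-1)=1$, with a different constant.)
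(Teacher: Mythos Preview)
Your proof is correct and essentially the same as the paper's: both pick a random map $\phi:\zo^s\to\zo^{4s}$, use a union bound to get the Sidon property $\{a,b\}\mapsto\phi(a)\oplus\phi(b)$ injective on unordered pairs, and decode by inverting this map and applying the original $D$. Your version is slightly more careful (you treat the $w=\vec0$ case explicitly and note that $D$ need only be symmetric on label-pairs that actually arise, whereas the paper simply asserts ``$D$ must be symmetric''), and you add the optional explicit $t\mapsto(t,t^3)$ construction, but the core argument is identical.
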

\begin{proof}
Let $D : \zo^* \times \zo^* \to \zo$ be the decoder of the adjacency labeling scheme for $\cF$, fix
any $n \in \bN$, and write $s = s(n)$. Without loss of generality, we assume that $D$ is symmetric, i.e., $D(a,b) = D(b,a)$ for any $a,b \in \zo^s$. Let $\phi : \zo^s \to \zo^{4s}$ be uniformly randomly chosen, so that for every $z
\in \zo^s$, $\phi(z) \sim \zo^{4s}$ is a uniform and independently random variable. For any two
distinct pairs $\{z_1,z_2\}, \{z_1',z_2'\} \in {\zo^s \choose 2}$ where $z_1 \neq z_2$, $z_1' \neq
z_2'$, and $\{z_1, z_2\} \neq \{z_1', z_2'\}$, the probability that $\phi(z_1) \oplus \phi(z_2) =
\phi(z_1') \oplus \phi(z_2')$ is at most $2^{-4s}$, since at least one of the variables $\phi(z_1),
\phi(z_2), \phi(z_1'), \phi(z_2')$ is independent of the other ones.  Therefore, by the union bound,
\[
  \Pr{ \exists \text{ distinct } \{z_1,z_2\}, \{z_1', z_2'\} :
    \phi(z_1) \oplus \phi(z_2) = \phi(z_1') \oplus \phi(z_2')}
  \leq {2^s \choose 2}^2 2^{-4s} \leq \frac{1}{4} \,.
\]
Then there is $\phi : \zo^s \to \zo^{4s}$ such that each distinct pair $\{z_1,z_2\} \in { \zo^s
\choose 2 }$ is assigned a distinct unique value $\phi(z_1) \oplus \phi(z_2)$. So the function
$\Phi(\{z_1, z_2\}) = \phi(z_1) \oplus \phi(z_2)$ is a one-to-one map ${ \zo^s \choose 2} \to
\zo^{4s}$. Then for any graph $G \in \cF$ on $n$ vertices, with labeling $\ell : V(G) \to \zo^s$, we
may assign the new label $\ell'(x) = \phi(\ell(x))$. On labels $\phi(\ell(x)), \phi(\ell(y)) \in
\zo^s$, the decoder for the XOR-labeling scheme simply computes $D\left(\Phi^{-1}(\phi(\ell(x)) \oplus
\phi(\ell(y)))\right) = D(\ell(x), \ell(y))$.
\end{proof}

\cref{lemma:xor} shows the existence of XOR-labelling schemes, but the
proof is non-constructive and in particular it 
does not provide an efficient algorithm to decode the labels. 
We now present an alternative, slightly more complicated, but constructive version of \cref{lemma:xor}. 
It reduces the number of bits in the
XOR-labeling from $4s(n)$ to $2s(n)+2$, and most importantly, it provides an
efficient and deterministic way
to retrieve $\ell(x)$ and $\ell(y)$ from $\ell(x) \oplus \ell(y)$. The
construction is based on the proof of a result of Lindstr\"om \cite[Theorem 2]{Lin69} about Sidon
sets (see also \cite[Proposition 5.1]{BS85}
for a slightly more general result).

\smallskip

We will need a number of
classical facts on binary fields, which we recall now. The Galois
field $\textrm{GF}(2^m)$ can be constructed as follows: its elements
are the polynomials $P(X)=\sum_{i=0}^{m-1} a_i X^i\in \textrm{GF}(2)[X]$ of degree less than
$m$, which are in one-to-one correspondence with their sequences of coefficients
$a(P):=(a_0,\ldots,a_{m-1})\in \textrm{GF}(2)^m$. Adding two elements $P_1$ and $P_2$ in $\textrm{GF}(2^m)$ corresponds to
adding the two polynomials in $\textrm{GF}(2)[X]$, or equivalently
to computing their sequence of coefficients as $a(P_1)\oplus a(P_2)$ (this can be done in time $O(m)$). Multiplying $P_1$ and $P_2$ corresponds to multiplying the
polynomials in $\textrm{GF}(2)[X]$, and then taking the remainder
modulo some fixed irreducible polynomial of degree $m$ in
$\textrm{GF}(2)[X]$ (such a polynomial can be computed
deterministically in time $\widetilde{O}(m^4)$ \cite{Sho90}\footnote{$\widetilde{O}(\cdot)$ hides a polylogarithmic factor.}). The multiplication can be done in time $O(m^2)$
%, as the computation of multiplicative inverses
\cite[Chapter 2]{HAC}. Finally, any
quadratic equation in a field has at most two solutions. In
$\textrm{GF}(2^m)$, these
solutions can be computed explicitly in time $O(m^3)$ \cite{Chen82}. We note that this computation does not use the quadratic formula, which fails in
fields of characteristic 2. We also note that in order to apply Chen's
formula \cite{Chen82}  it is convenient to assume that $m$
is odd (in which case we can apply Theorem 1 from \cite{Chen82} rather
than Theorems 2 and 3, which require additional computations).

\begin{lemma}
\label{lemma:xor2}
Let $\cF$ be any class of graphs with an adjacency labeling scheme of size $s(n)$. Then
$\cF$ admits an XOR-labeling scheme of size at most $2s(n)+2$. Moreover,
given $\ell(x) \oplus \ell(y)$, a decoder can retrieve $\ell(x)$ and
$\ell(y)$ deterministically in time $\widetilde{O}(s(n)^4)$.
\end{lemma}
\begin{proof}
  Let $D : \zo^* \times \zo^* \to \zo$ be the decoder of the adjacency labeling scheme for $\cF$, fix
  any $n \in \bN$, and let $s \in\{ s(n),s(n)+1\}$ be an odd integer.
  We assume that the encoder and the decoder agree on an irreducible polynomial of degree $s$ that is used to define
  the field $\textrm{GF}(2^s)$. Such a polynomial can be computed deterministically in time $\widetilde{O}(s^4)$ \cite{Sho90}.
For any $P\in \textrm{GF}(2^s)$, let $\pi(P):=(P,P^3) \in \textrm{GF}(2^s)\times
\textrm{GF}(2^s)$. We claim that for any $P_1\ne P_2\in
\textrm{GF}(2^s)$, $P_1$ and $P_2$ can be uniquely retrieved, in time $O(s^3)$, from the
entrywise sum $\pi(P_1)+\pi(P_2)$ in $\textrm{GF}(2^s) \times
\textrm{GF}(2^s)$ (which corresponds to the XOR of their
sequences of $2s$ coefficients in $ \textrm{GF}(2)$). This follows
from the fact that if $\pi(P_1)+\pi(P_2)=(A,B) \in \textrm{GF}(2^s)\times
\textrm{GF}(2^s)$,  $P_1$ and $P_2$ 
satisfy the equations $P_1+P_2=A\ne 0$ and $P_1^3+P_2^3=B$ in
$\textrm{GF}(2^s)$. By substituting $P_2=A+P_1$ in the second
equality (and recalling that all computations are done in a field of
characteristic 2), we obtain that $P_1$ and $P_2$ are the two solutions of
the quadratic equation $AP^2+A^2P+(A^3+B)=0$, which can be computed in
time $O(s^3)$ \cite[Theorem 1]{Chen82}.

Now, for any graph $G \in \cF$ on $n$ vertices, with labeling $\ell : V(G) \to \zo^{s(n)}$, 
we replace $\ell(x)$ by the label $\ell'(x)\in \{0,1\}^{2s}$
defined as follows. If $s=s(n)+1$ we first add a 0 at the end of
$\ell(x)$, making it an element of $\{0,1\}^{s}$, or equivalently $\textrm{GF}(2)^s$. Let $P_x\in
\textrm{GF}(2^s)$ be the polynomial whose coefficients are given by
$\ell(x)$. Then we
simply define $\ell'(x)$ as the sequence of $2s$ coefficients of
$\pi(P_x)$. Note that for any two distinct vertices $x,y$,
$\ell'(x)\oplus \ell'(y)$ is equal to the sum $\pi(P_x)+\pi(P_y)$ in
$\textrm{GF}(2^s) \times
\textrm{GF}(2^s)$, and $\{ P_x, P_y \}$ can be uniquely retrieved from
this sum. It follows that $\{ \ell(x), \ell(y) \}$ can be uniquely
retrieved from this sum, and, since the decoder is symmetric,
$D(\ell(x),\ell(y))$ can be computed given only the sum $\ell'(x)\oplus \ell'(y)$, as desired.
\end{proof}

We can now prove the first part of \cref{thm:main}. 
\begin{lemma}
\label{lemma:induced}
Let $\cF$ be a hereditary class of graphs that admits an adjacency labeling scheme of size $s(n)$.
Then $\her(\cF^\square)$ admits an adjacency labeling scheme of size $2s(n) + O(\log n)$.
\end{lemma}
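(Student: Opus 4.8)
The plan is to realize the three-phase strategy from the outline. Given $G \subset_I G_1 \square \dotsm \square G_d$ with each $G_i \in \cF$, I would first restrict each factor to the values it actually uses: let $G_i'$ be the induced subgraph of $G_i$ on $\{x_i : x \in V(G)\}$. Since $\cF$ is hereditary, $G_i' \in \cF$ and $|V(G_i')| \le |V(G)| = n$, and still $G \subset_I G_1' \square \dotsm \square G_d'$. Identifying $V(G)$ with a subset of $\Sigma^d$ for $\Sigma = V(G_1') \disjointunion \dotsm \disjointunion V(G_d')$, the Hamming distance of $x,y$ is exactly their number of differing coordinates, so \cref{prop:hamming-distance} supplies a (universal) decoder $D_1$ and a labeling $\ell_1 : V(G) \to \zo^{O(\log n)}$ with $D_1(\ell_1(x), \ell_1(y)) = 1$ iff $\dist(x,y) = 1$. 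That is Phase~1.

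For Phase~2 I would use XOR-labels for the factors. First I would pass to a labeling scheme for $\cF$ in which \emph{all} graphs on at most $n$ vertices receive labels of one common length $\hat s(n) = s(n) + O(\log n)$, and then apply \cref{lemma:xor} to this uniform-length scheme (with parameter $n$), obtaining a single function $g : \zo^{4\hat s(n)} \to \zo$ together with, for each $H \in \cF$ on at most $n$ vertices, a labeling $\lambda^H : V(H) \to \zo^{4\hat s(n)}$ such that $g(\lambda^H(u) \oplus \lambda^H(v)) = \ind{uv \in E(H)}$ for all $u,v \in V(H)$ (in particular $g(\vec 0) = 0$). Writing $\lambda_i = \lambda^{G_i'}$ and $\Lambda(x) := \bigoplus_{i=1}^d \lambda_i(x_i) \in \zo^{4\hat s(n)}$, the crucial observation is that if $x,y$ differ in exactly one coordinate $i$ then every term with $j \ne i$ cancels, so $\Lambda(x) \oplus \Lambda(y) = \lambda_i(x_i) \oplus \lambda_i(y_i)$, whence $g(\Lambda(x) \oplus \Lambda(y)) = \ind{x_iy_i \in E(G_i')} = \ind{x_iy_i \in E(G_i)}$.

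I would then take the final label of $x$ to be the concatenation of an $O(\log n)$-bit encoding of $n$, the Phase-1 label $\ell_1(x)$, and $\Lambda(x)$ — of total length $4s(n) + O(\log n)$ — with decoder $D(\ell(x), \ell(y)) = D_1(\ell_1(x), \ell_1(y)) \wedge g(\Lambda(x) \oplus \Lambda(y))$, where $n$ is read off the prefix. Correctness is a short case analysis: because $G$ is an \emph{induced} subgraph of $G_1' \square \dotsm \square G_d'$, we have $xy \in E(G)$ iff $\dist(x,y) = 1$ and $x_iy_i \in E(G_i)$ at the unique differing coordinate $i$; if $\dist(x,y) \ne 1$ the first conjunct vanishes, and if $\dist(x,y) = 1$ the second conjunct equals exactly $g(\Lambda(x) \oplus \Lambda(y))$ by the observation above, with the degenerate case $x = y$ absorbed into $g(\vec 0) = 0$.

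The step I expect to need the most care is the preliminary reduction to a common label length in Phase~2. The naive idea of zero-padding each factor's labels to a common length fails, because the padding is identical on both arguments of the XOR and cancels, leaving the decoder unable to tell how much of $\Lambda(x) \oplus \Lambda(y)$ is ``real''; one genuinely needs a scheme for $\cF$ in which every graph on at most $n$ vertices gets labels of one fixed length, which I would obtain by padding each label up to length $s(n) + O(\log n)$ with a self-delimiting marker \emph{before} invoking \cref{lemma:xor}. (Here one uses that the trivial scheme has size $O(n)$, so this length overhead stays $O(\log n)$, and one may assume $s$ is non-decreasing.) Everything else is routine.
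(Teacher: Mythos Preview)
Your proposal is correct and follows essentially the same approach as the paper: restrict factors to at most $n$ vertices, use \cref{prop:hamming-distance} for Phase~1, and use the XOR of the factor labels from \cref{lemma:xor} for Phase~2, with the decoder conjuncting the two tests. The one difference is that you explicitly address the label-length uniformity issue (ensuring all factors' XOR-labels live in the same $\zo^{4\hat s(n)}$ so that a single $g$ applies), which the paper's proof handles only implicitly by writing ``labels of size $4s(n)$'' and a single $g$; your self-delimiting padding before invoking \cref{lemma:xor} is a clean way to make this rigorous and costs only an additive $O(\log n)$, so the bound is unaffected.
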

\begin{proof}
By \cref{lemma:xor2}, there is an XOR-labeling scheme for $\cF$ with labels of size $2s(n)+2$. Let $D :
\zo^* \times \zo^* \to \zo$ be the decoder for this scheme, with $D(a,b) = g(a \oplus b)$ for some
function $g$. Design the labels for $\her(\cF^\square)$ as follows. Consider a graph $G \in
\her(\cF^\square)$, so that $G \subset_I G_1 \square G_2 \square \dotsm \square G_d$ for
some $d \in \bN$ and $G_i \in \cF$ for each $i \in [d]$. Since $\cF$ is hereditary, we may assume
that each $G_i$ has at most $n$ vertices; otherwise we could simply replace it with the subgraph of
$G_i$ induced by the vertices $\{ x_i : x \in V(G) \}$.  For each $x = (x_1, \dotsc, x_d) \in V(G)$,
construct the label as follows:
\begin{enumerate}
\item Treating the vertices in each $G_i$ as characters of the alphabet $[n]$, use $O(\log n)$ bits
to assign the label given to $x = (x_1, \dotsc, x_d) \in [n]^d$ by \cref{prop:hamming-distance}.
\item Using $2s(n)+2$ bits, append the vector $\bigoplus_{i \in [d]} \ell_i(x_i)$, where $\ell_i(x_i)$
is the label of $x_i \in V(G_i)$ in graph $G_i$, according to the XOR-labeling scheme for $\cF$.
\end{enumerate}
The decoder operates as follows. Given the labels for $x,y \in V(G)$:
\begin{enumerate}
\item If $x$ and $y$ differ on exactly one coordinate, as determined by the first part of the label,
continue to the next step. Otherwise output ``not adjacent''\!\!.
\item Now guaranteed that there is a unique $i \in [d]$ such that $x_i \neq y_i$, output
``adjacent'' if and only if the following is 1:
\begin{align*}
  D\left( \bigoplus_{j \in [d]} \ell_j(x_j)\,, \bigoplus_{j \in [d]} \ell_j(y_j) \right)
  &= g\left( \bigoplus_{j \in [d]} \ell_j(x_j) \oplus \bigoplus_{j \in [d]} \ell_j(y_j) \right) \\
  &= g\left( \ell_i(x_i) \oplus \ell_i(y_i) \oplus
      \bigoplus_{j \neq i} \ell_j(x_j) \oplus \ell_j(y_j) \right)
  = g( \ell_i(x_i) \oplus \ell_i(y_i) ) \,,
\end{align*}
where the final equality holds because $x_j = y_j$ for all $j \neq i$, so $\ell_j(x_j) =
\ell_j(y_j)$. Then the output value is 1 if and only $x_iy_i$ is an edge of $G_i$; equivalently,
$xy$ is an edge of $G$.
\end{enumerate}
This concludes the proof.
\end{proof}

%It is tempting to attempt to directly generalize the above labeling scheme to subgraphs, not just
%induced subgraphs, but this will not work: subgraphs do not have the necessary property that
%$\ell_j(x_j) = \ell_j(y_j)$ when $x_j=y_j$.

The XOR-labeling trick can also be used to simplify the proof of \cite{HWZ21} for adjacency
\emph{sketches} of Cartesian products. That proof is similar to the one above, except it uses a
two-level hashing scheme and some other tricks to avoid destroying the labels of $x_i$ and $y_i$
with the XOR (with sufficiently large probability of success). This two-level hashing approach does
not succeed in our current setting, and we avoid it with XOR-labeling.

\subsection{Phase 3: Subgraphs}

Finally, we must check whether the edge $xy \in E(H)$ in the \emph{induced} subgraph $H \subset_I
G_1 \square \dotsm \square G_d$ has been deleted in $E(G)$. There is a minimal and perfect tool for
this task:

\begin{theorem}[Minimal Perfect Hashing]
\label{thm:perfect-hashing}
For every $m, k \in \bN$, and any $S \subseteq [m]$ of size $k$, there exists a function $h : [m] \to [k]$ where the image of $S$ under $h$ is $[k]$ and for every distinct $i,j \in S$ we have $h(i) \neq h(j)$. The function $h$ can be
stored in $k \ln e + \log\log m + o(k + \log\log m)$ bits of space and it can be computed by a
randomized algorithm in expected time $O(k + \log\log m)$.
%For every $m, k \in \bN$, there is a family $\cP_{m,k}$ of hash functions $[m] \to [k]$ such that,
%for any $S \subseteq [m]$ of size $k$, there exists $h \in \cP_{m,k}$ where the image of $S$ under
%$h$ is $[k]$ and for every distinct $i,j \in S$ we have $h(i) \neq h(j)$. The function $h$ can be
%stored in $k \ln e + \log\log m + o(k + \log\log m)$ bits of space and it can be computed by a
%randomized algorithm in expected time $O(k + \log\log m)$.
\end{theorem}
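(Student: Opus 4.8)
The plan is not to reprove this from scratch but to invoke the minimal perfect hashing construction of Hagerup and Tholey, which is the endpoint of a long line of work (Fredman, Koml\'os and Szemer\'edi; Mehlhorn; Schmidt and Siegel; and others); the stated near-optimal space bound matches the information-theoretic minimum up to lower-order terms. For completeness I would present the construction in two stages. The first is a \emph{universe reduction}: using a prime $p$ slightly larger than $m$ together with a random multiplier, one maps $[m]$ into a universe of size $\poly(k)$ so that $S$ is mapped injectively with constant probability; a more careful, two-level version of this step (going back to the earlier literature) can be described using only $O(\log\log m)$ bits, and this is the source of both the additive $\log\log m$ in the space bound and the additive $\log\log m$ in the construction time. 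The second is \emph{bucketing}: on the reduced universe, a second random hash splits $S$ into $\Theta(k/\log k)$ buckets, each of size $b = O(\log k)$ with high probability; within each bucket a perfect hash function is found by brute-force search over a fixed family, and the disjoint union of the per-bucket ranges is then collapsed onto exactly $[k]$ by a displacement array. The sequence of bucket sizes and the per-bucket function indices are stored using arithmetic coding, so that the total length is $(1+o(1))$ times the information-theoretic optimum.

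For the running time, each random hash function used (in both stages) is injective, or otherwise well-behaved, on $S$ with constant probability, so $O(1)$ trials suffice in expectation, and a single trial can be verified in $O(k)$ time; a suitable prime near $m$ is found in $O(\log\log m)$ expected time. If the bucket size $b$ is taken to grow slowly with $k$ --- concretely $b = \Theta(\log k / \log\log k)$ --- then the table of all perfect hash functions on $b$-element sets both fits in $o(k)$ bits and is precomputable in $o(k)$ time, so the brute-force search within a bucket costs $O(1)$ amortized per element; summing over all buckets gives $O(k)$, hence expected total time $O(k + \log\log m)$.

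The main difficulty --- the reason this is a genuine theorem rather than a routine composition of standard tricks --- is obtaining the \emph{optimal leading constant} in the space bound simultaneously with near-linear construction time. A plain Fredman--Koml\'os--Szemer\'edi two-level scheme already yields a perfect (and, after adding an offset array, minimal) hash function using $O(k\log k)$ bits; pushing the leading term down to the optimum forces exactly the delicate choices above: the bucket size must be a slowly growing function of $k$ so that per-bucket tables are both small and $O(1)$-evaluable, and the bucket-size profile together with the chosen within-bucket functions must be arithmetic-coded against their actual distributions so that essentially no bits are wasted on alignment or on buckets that happen to be easy. These bit-level arguments are technical but by now standard, and since all that we need downstream is the existence, compact representability, and efficient constructibility asserted in the statement, I would cite the Hagerup--Tholey construction rather than reproduce its analysis in full.
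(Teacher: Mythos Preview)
Your proposal is correct and matches the paper's approach: the paper does not prove this theorem at all but simply cites it, attributing the space bound to Mehlhorn and the full statement (with the construction time) to Hagerup and Tholey. Your sketch of the Hagerup--Tholey construction goes well beyond what the paper provides, but since both you and the paper ultimately treat this as a black-box citation, there is no substantive difference.
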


Minimal perfect hashing has been well-studied. A proof of the space bound appears in \cite{Meh84}
and significant effort has been applied to improving the construction and evaluation time. We take
the above statement from \cite{HT01}.  We note that the randomized
computation of $h$ can be replaced by a (slightly less efficient)
deterministic computation at the cost of a multiplicative factor $\log k$ in the
storage space of $h$ \cite{AN96}.

\smallskip

We now conclude the proof of \cref{thm:main} by applying the
next lemma to the class $\cG = \her(\cF^\square)$, using the labeling scheme for $\her(\cF^\square)$
obtained in \cref{lemma:induced} (note that $\mon(\her(\cF^\square)) = \mon(\cF^\square)$).

\begin{lemma}
\label{lemma:hashing}
% NH: I changed the statement to require that \cG be hereditary. This is because in the proof we
% require that H has n vertices and if \cG is not hereditary it may not contain any n-vertex graphs
% that contain G as a subgraph.
Let $\cG$ be any hereditary graph class which admits an adjacency labeling scheme of size $s(n)$.
Then $\mon(\cG)$ admits an adjacency labeling scheme where each $G \in \mon(\cG)$ on $n$ vertices
has labels of size $s(n) + O(\dgn(n) + \log n)$, where $\dgn(n)$ is the maximum degeneracy of any $n$-vertex graph in $\cG$.
%where $k(G)$ is the degeneracy of $G$.
\end{lemma}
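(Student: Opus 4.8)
The plan is to realise each $G\in\mon(\cG)$ as a spanning subgraph of some $H\in\cG$ on the same vertex set $[n]$ (taking $H$ induced on $V(G)$, which lies in $\cG$ when $\cG$ is hereditary, as it is in the intended application), fix an adjacency labeling $\ell_H\colon[n]\to\zo^{s(n)}$ for $H$ with decoder $D_{\cG}$, and add to each vertex's label just enough information to decide which of its incident $H$-edges survive in $G$. Since $E(G)\subseteq E(H)$, the decoder first uses the $\ell_H$-parts to test $H$-adjacency and then runs a second phase testing whether a given $H$-edge was deleted. The role of the degeneracy hypothesis is that this second phase becomes local: fix a degeneracy ordering $v_1,\dots,v_n$ of $G$ and orient every edge of $G$ from its later to its earlier endpoint, so that each $v_i$ has out-degree $d_i\le k:=k(G)$; writing $S_i\subseteq[i-1]$ for the index set of the out-neighbours of $v_i$, an edge $v_iv_j\in E(H)$ with $i>j$ belongs to $E(G)$ if and only if $j\in S_i$. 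So it suffices that each $v_i$ carry a membership tester for the at-most-$k$-element set $S_i$.

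Concretely, I would make the label of $v_i$ the concatenation of (a) $\ell_H(v_i)$, costing $s(n)$ bits; (b) the integers $i$ and $d_i$, costing $O(\log n)$ bits; and (c) an encoding of $S_i\subseteq[n]$. For (c) I would use \cref{thm:perfect-hashing}: choose $h_i\in\cP_{n,\max(d_i,1)}$ mapping $S_i$ bijectively onto $[d_i]$, which costs $d_i\ln\ee+\log\log n+o(d_i+\log\log n)=O(k+\log\log n)$ bits, and also store, indexed by the $d_i$ ``slots'', the identities of the out-neighbours occupying them. The decoder on $\ell(v_i),\ell(v_j)$ with $i>j$: if $D_{\cG}$ reports $v_iv_j\notin E(H)$, output ``non-adjacent''; otherwise compute $p:=h_i(j)$ and output ``adjacent'' iff the vertex recorded in slot $p$ of $v_i$'s label is $v_j$. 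Correctness is immediate, since $h_i$ is a bijection on $S_i$: if $j\in S_i$ then $v_j$ is exactly the vertex recorded in slot $h_i(j)$, and if $j\notin S_i$ then slot $h_i(j)$ holds a different vertex (or none).

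The main obstacle is the size of the slot-to-vertex table in (c): storing a full $\lceil\log n\rceil$-bit index per slot costs $O(k\log n)$, which only gives labels of size $s(n)+O(k\log n)$. Obtaining the claimed $s(n)+O(k+\log n)$ requires verifying ``$v_j$ occupies slot $h_i(j)$ of $v_i$'' using only $O(k)$ total bits at $v_i$ for the verification (plus $O(\log n)$ more bits at $v_j$, which the label budget already allows). I expect the right move is to give each vertex $v_j$ a single $O(\log n)$-bit self-identifier and to have $v_i$ store, per slot, only a short (sub-$\log n$) tag that, together with the queried vertex's self-identifier, determines the occupant without false positives — with minimal perfect hashing being what keeps $v_i$'s side of this at $O(k)$ bits in total. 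Pinning down such a scheme and proving it incurs zero false positives across all pairs, at cost $O(k+\log n)$ rather than $O(k\log n)$, is the step I anticipate being the crux; the degeneracy orientation, the splicing-in of the $\ell_H$ labels, and the remaining bit budgeting are routine.
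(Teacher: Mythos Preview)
Your overall strategy—degeneracy ordering plus minimal perfect hashing via \cref{thm:perfect-hashing}—is exactly what the paper does. The obstacle you flag is real for your setup, but the paper sidesteps it with one change: hash the forward \emph{$H$}-neighbourhood $N^+(x)$ of $x$, not the forward $G$-neighbourhood $S_i$. Once the decoder has verified $xy\in E(H)$ and $x\prec y$, it is \emph{guaranteed} that $y\in N^+(x)$; since $h_x$ is injective on $N^+(x)$, the value $h_x(y)$ lands in $y$'s own slot with no possibility of collision, and each slot needs only a \emph{single bit} recording whether that particular $H$-edge survives in $G$. This gives $O(|N^+(x)|)$ bits for the table rather than $O(|N^+(x)|\log n)$, and no identity tags or verification step are needed at all. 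Your proposed ``short tag plus self-identifier'' machinery is unnecessary.

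One caveat: making this work requires the ordering to bound the forward $H$-degree, i.e.\ one takes a degeneracy ordering of $H$ (not of $G$), so the per-vertex cost is $O(k(H)+\log n)$. The paper's proof uses $H$-neighbours throughout while writing $k=k(G)$, which appears to be a slip; the argument as written delivers $O(k(H)+\log n)$. Since $G\subset H$ implies $k(G)\le k(H)$, this is formally weaker than the stated lemma, but it is what is actually needed (and matched by the lower bound) in \cref{section:optimality}.
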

\begin{proof}
Let $G \in \mon(\cG)$ have $n$ vertices, so that it is a subgraph of $H \in \cG$ on $n$ vertices.
The labeling scheme is as follows.
\begin{enumerate}
\item Fix a total order $\prec$ on $V(H)$ such that each vertex $x$ has at most $\dgn=\dgn(n)$ neighbors
$y$ in $H$ with $x \prec y$; this exists by the definition of degeneracy. We will identify each vertex $x$ with
its position in the order.
\item For each vertex $x$, assign the label as follows:
\begin{enumerate}
\item Use $s(n)$ bits for the adjacency label of $x$ in $H$.
\item Use $\log n$ bits to indicate the position of $x$ in the order.
\item Let $N^+(x)$ be the set of neighbors $y$ of $x$ in $H$ with $x \prec y$,
and denote $d_x = |N^+(x)|$.
Construct a perfect hash function $h_x : N^+(x) \to [d_x]$ and store it using $O(d_x + \log\log n)=O(\dgn + \log\log n)$ bits.
%Construct a perfect hash function $h_x : N^+(x) \to [\dgn]$ and store it, using $O(\dgn + \log\log n)$.
\item Use $d_x \leq \dgn$ bits to write the function $\mathsf{edge}_x : [d_x] \to \zo$ which takes value 1 on $i \in [d_x]$ if and only if $xy$ is an edge of $G$, where $y$ is the unique vertex in $N^+(x)$
satisfying $h_x(y) = i$.
\end{enumerate}
\end{enumerate}
Given the labels for $x$ and $y$, the decoder performs the following:
\begin{enumerate}
\item If $xy$ are not adjacent in $H$, output ``not adjacent''\!\!.
\item Otherwise $xy$ are adjacent. If $x \prec y$, we are guaranteed that $y$ is in the domain of
$h_x$, so output ``adjacent'' if and only if $\mathsf{edge}_x(h_x(y))=1$. If $y \prec x$, output
``adjacent'' if and only if $\mathsf{edge}_y(h_y(x))=1$.
\end{enumerate}
This concludes the proof.
\end{proof}

\section{Optimality}
\label{section:optimality}

We now prove the optimality of our labeling schemes, and \cref{cor:efficient}. We require:

\begin{proposition}
\label{prop:monotone-degeneracy}
For any hereditary class $\cF$, let $\dgn(n)$ be the maximum degeneracy of an $n$-vertex graph in $\her(\cF^\square)$. Then, for every $n \in \bN$, the class $\her(\cF^\square)$ contains a graph $H$ on $n$ vertices with at least
$n \cdot \dgn(n) / 4$ edges, so $\mon(\cF^\square)$ contains all $2^{n\cdot \dgn(n)/4}$ spanning subgraphs
of $H$.
\end{proposition}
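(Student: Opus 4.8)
I would first use the degeneracy of a fixed degeneracy‑maximiser to extract a \emph{dense} induced subgraph, and then rebuild a graph on \emph{exactly} $n$ vertices by gluing together several disjoint copies of it (padded with isolated vertices), relying on the fact that $\her(\cF^\square)$ is closed under disjoint union. The edge count then follows from a short case split, and the final "$2^{nk(n)/4}$ spanning subgraphs" conclusion is immediate since $\mon(\her(\cF^\square))=\mon(\cF^\square)$.

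\textbf{Step 1 (the dense subgraph).} Write $k=k(n)$ and fix $G\in\her(\cF^\square)$ on $n$ vertices whose degeneracy equals $k$. (If $\her(\cF^\square)$ has no $n$-vertex graph the statement is vacuous; if $k=0$ then $G$ is edgeless and $H:=G$ already works.) Since $G$ is not $(k-1)$-degenerate, it has a subgraph $H_0$ with minimum degree at least $k$; put $W:=V(H_0)$, $m:=|W|$, and $Q:=G[W]$. Because $H_0$ is a spanning subgraph of $Q$, every vertex of $Q$ has degree $\ge k$, so $|E(Q)|\ge mk/2$ and $k+1\le m\le n$; moreover $Q\in\her(\cF^\square)$ as $\cF$ is hereditary.

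\textbf{Step 2 (rebuild to $n$ vertices and count).} Let $t:=\lfloor n/m\rfloor\ge 1$ and let $H$ be $t$ disjoint copies of $Q$ together with $n-tm$ isolated vertices, so $|V(H)|=n$ and $|E(H)|=t\,|E(Q)|\ge tmk/2$. If $m\le n/2$ then $tm\ge n-m\ge n/2$; if $m>n/2$ then $t=1$ and $tm=m>n/2$; in both cases $|E(H)|\ge nk/4$. Granting $H\in\her(\cF^\square)$, its $2^{|E(H)|}\ge 2^{nk/4}$ pairwise distinct spanning subgraphs are subgraphs of a member of $\her(\cF^\square)$, hence lie in $\mon(\her(\cF^\square))=\mon(\cF^\square)$, which is the claim.

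\textbf{Step 3 (the only non-routine point: closure under disjoint union).} Since $k\ge 1$, $G$ has an edge, which projects to an edge of some Cartesian factor, so $\cF$ contains a graph $C$ on at least two vertices. Given $Q_1,Q_2\in\her(\cF^\square)$ with $Q_i\subset_I F_i\in\cF^\square$, replace $F_i$ by $F_i\square C$ — still in $\cF^\square$, still containing $Q_i$ as an induced subgraph in a fibre, but now with $|V(F_i)|>|V(Q_i)|$ — and pick $v_i\in V(F_i)\setminus V(Q_i)$. In $F_1\square F_2$, the set $(V(Q_1)\times\{v_2\})\cup(\{v_1\}\times V(Q_2))$ induces $Q_1\sqcup Q_2$: each part is a fibre-copy of some $Q_i$, the two parts are disjoint, and a vertex $(x,v_2)$ with $x\in V(Q_1)$ and a vertex $(v_1,y)$ with $y\in V(Q_2)$ differ in \emph{both} groups of coordinates (as $x\ne v_1$ and $y\ne v_2$), hence are non-adjacent. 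Iterating, $\her(\cF^\square)$ contains $\overline{K_j}$ for every $j$ and is closed under disjoint union, so the graph $H$ of Step 2 indeed lies in $\her(\cF^\square)$. The main obstacle is conceptually small but easy to overlook: a degeneracy‑$k$ graph can have as few as $\binom{k+1}{2}$ edges, so one genuinely has to reconstruct a dense graph on the \emph{full} $n$-vertex set, and the clean way to do that is the "orthogonal fibres" disjoint-union trick above; everything else is the textbook degeneracy $\Leftrightarrow$ dense-subgraph equivalence plus arithmetic.
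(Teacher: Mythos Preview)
Your proof is correct and follows the same outline as the paper: extract an induced subgraph $Q$ of minimum degree $\ge k$ from a degeneracy-$k$ witness, then place several copies of $Q$ in orthogonal fibres of a Cartesian product to fill out $n$ vertices. The one difference is packaging. The paper does the second step in one shot, taking the ``star of axes'' through a basepoint $w$ inside $(G')^d$ and then trimming to $n$ vertices; you instead isolate the statement that $\her(\cF^\square)$ is closed under disjoint union (via the two-fibre trick in $F_1\square F_2$ with displaced basepoints) and then assemble $t$ copies of $Q$ plus isolated vertices. Your version is a touch cleaner---the copies are genuinely disjoint with no shared basepoint and no trimming step, and the closure lemma is reusable---while the paper's version is marginally more direct for the specific goal. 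The arithmetic in your case split ($tm\ge n-m\ge n/2$ when $m\le n/2$, and $tm=m>n/2$ otherwise) is correct, and your justification that $\cF$ contains a graph on $\ge 2$ vertices (needed for the $\square C$ padding) via the edge in $G$ when $k\ge 1$ is fine, with $k=0$ handled separately.
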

\begin{proof}
Fix an arbitrary $n \in \bN$ and let $G$ be an $n$-vertex graph in $\her(\cF^\square)$ of degeneracy $\dgn = \dgn(n)$.
By definition, $G$ contains an induced subgraph $G' \subset_I G$ with minimum
degree $\dgn$ and $n_1 \leq n$ vertices. If $n_1 \geq n/2$ then $G$ itself has at least $\dgn n_1/2 \geq \dgn n/4$ edges, and we are done. Now assume $n_1 < n/2$. Since $G \in \her(\cF^\square)$, $G \subset_I
H_1 \square \dotsm \square H_t$ for some $t \in \bN$ and $H_i \in \cF$. So for any $d \in \bN$, the
graph $(G')^d \subset_I (H_1 \square \dotsm \square H_t)^d$ belongs to $\her(\cF^\square)$. Consider
the graph $H \subset_I (G')^d$ defined as follows. Choose any $w \in V(G')$, and for each $i \in [d]$ let
\[
V_i = \{ (v_1, v_2, \dotsc, v_d) : v_i \in V(G') \text{ and } \forall j \neq i, v_j = w \}\,,
\]
and let $H$ be the graph induced by vertices $V_1 \cup \dotsm \cup V_d$. Then $H$ has $dn_1$
vertices, each of degree at least $\dgn$, since each $v \in V_i$ is adjacent to $\dgn$ other vertices in $V_i$. Set $d = \lceil n/n_1 \rceil$, so that $H$ has at least $n$ vertices, and let $m = dn_1-n$, which satisfies $m < n_1$. Remove any $m$ vertices of $V_1$. The remaining graph $H'$ has $n$
vertices, and at least $(d-1)n_1 \geq n-n_1 > n/2$ vertices of degree at least $\dgn$. Then $H'$ has at least $\dgn n/4$ edges.
\end{proof}

The next proposition shows that \cref{thm:main} is optimal up to constant factors. It is
straightforward to check that this proposition implies \cref{cor:efficient}.

\begin{proposition}
Let $\cF$ be a hereditary class whose optimal adjacency labeling scheme has size $s(n)$ and which
contains a graph with at least one edge. Then any adjacency labeling scheme for $\her(\cF^\square)$
has size at least $\Omega(s(n) + \log n)$, and any adjacency labeling scheme for $\mon(\cF^\square)$
has size at least $\Omega(s(n) + \dgn(n) + \log n)$, where $\dgn(n)$ is the maximum degeneracy of any
$n$-vertex graph in $\mon(\cF^\square)$.
\end{proposition}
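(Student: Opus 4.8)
The plan is to establish the three lower-bound terms separately. The term $\Omega(\log n)$ is immediate: since $\cF$ contains a graph with an edge, $\her(\cF^\square)$ contains arbitrarily long paths (taking Cartesian products of $K_2$, or rather a path inside $K_2^{d}$), and in fact $\her(\cF^\square)$ contains graphs on $n$ vertices with at least $n$ distinct isomorphism types of vertex-neighborhoods, forcing any adjacency labeling scheme to use labels of size $\Omega(\log n)$ — this is the standard observation that a labeling scheme of size $s$ yields a universal graph of size $2^s$, which must have at least $n$ vertices to contain an $n$-vertex graph with no twins. Concretely, $\mon(\{K_2\}^\square) \subseteq \mon(\cF^\square)$ contains the $n$-vertex star, whose center has degree $n-1$, so the universal graph has a vertex of degree $\ge n-1$, giving $2^s \ge n$ hence $s = \Omega(\log n)$.

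For the $\Omega(s(n))$ term, the idea is that $\cF$ itself embeds into $\her(\cF^\square)$ as a ``slice'' — a single factor with all other coordinates fixed — and more usefully, we can fit roughly $n / N$ disjoint copies of an arbitrary $N$-vertex graph $G_0 \in \cF$ into an $n$-vertex member of $\her(\cF^\square)$ by placing them along distinct coordinates of $G_0^d$, exactly as in the construction of \cref{prop:monotone-degeneracy} but without needing minimum degree. A labeling scheme for $\her(\cF^\square)$ of size $t(n)$ thus yields a labeling scheme for $\cF$ restricted to $N$-vertex graphs of size $t(O(N))$ (we can afford the union of $\Theta(n/N)$ copies since that just blows up the vertex count by a constant factor when $N = \Theta(n)$; taking $N$ with $cN \le n$ for the constant hidden in the disjoint-union construction gives $s(N) \le t(cN)$, i.e. $t(n) = \Omega(s(n/c)) = \Omega(s(n))$ assuming, as is standard, that $s$ grows at most polynomially so $s(n/c) = \Omega(s(n))$). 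A cleaner route, avoiding any regularity assumption on $s$: the optimal label size satisfies $s(n) = \Theta(\tfrac1n \log|\cF_n|)$ up to the finite/constant-size dichotomy discussed after \cref{cor:efficient}, and disjoint unions multiply the count, so $|\her(\cF^\square)_{O(N)}| \ge |\cF_N|$, which directly gives the bound on the optimal label size of $\her(\cF^\square)$ in terms of that of $\cF$.

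For the $\Omega(k(n))$ term in the monotone case, we invoke \cref{prop:monotone-degeneracy}: it produces an $n$-vertex graph $H \in \her(\cF^\square)$ with at least $n \cdot k(n)/4$ edges, all $2^{n k(n)/4}$ of whose spanning subgraphs lie in $\mon(\cF^\square)$. A labeling scheme of size $t(n)$ for $\mon(\cF^\square)$ encodes each such subgraph in $n \cdot t(n)$ bits, so $n \cdot t(n) \ge n k(n)/4$, giving $t(n) = \Omega(k(n))$. Combining the three bounds — they each apply to any adjacency labeling scheme, so in particular to the optimal one, and we may take the max — yields $\Omega(s(n) + \log n)$ for $\her(\cF^\square)$ and $\Omega(s(n) + k(n) + \log n)$ for $\mon(\cF^\square)$. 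The main subtlety to get right is the $\Omega(s(n))$ step: one must ensure the embedding of many disjoint copies of a base graph stays within $\her(\cF^\square)$ (which the $(G')^d$ trick handles) and must be careful about what ``optimal $s(n)$'' means in the regime $\log|\cF_n| = o(n\log n)$, where the right statement is phrased via the constant-size-scheme case of \cref{cor:efficient}; everything else is routine counting and twin-freeness.
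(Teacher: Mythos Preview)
Your handling of the $\Omega(\log n)$ and $\Omega(k(n))$ terms matches the paper's: hypercubes (or stars therein) force $\Omega(\log n)$, and \cref{prop:monotone-degeneracy} gives $2^{\Omega(nk(n))}$ graphs in $\mon(\cF^\square)$, hence $\Omega(k(n))$.

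For the $\Omega(s(n))$ term, however, you are working much too hard, and one of your two routes is actually incorrect. The paper's argument is one line: since $\cF^\square$ allows $d=1$, we have $\cF \subseteq \her(\cF^\square) \subseteq \mon(\cF^\square)$, so any labeling scheme for either product class is in particular a labeling scheme for $\cF$ and therefore has size at least $s(n)$ by definition of $s(n)$ as the optimal size. There is no need to pack disjoint copies, no need for a regularity assumption on $s$, and no counting. Your first route does work once you notice that taking $N = n$ (one copy) collapses it to exactly this containment; the disjoint-copies machinery and the assumption $s(n/c) = \Omega(s(n))$ are superfluous.

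Your ``cleaner route'' is a genuine gap: the claim that the optimal label size satisfies $s(n) = \Theta(\tfrac1n \log|\cF_n|)$ is precisely the (now refuted) Implicit Graph Conjecture. There exist hereditary classes with $\tfrac1n\log|\cF_n| = O(\log n)$ but optimal label size $\omega(\log n)$ \cite{HH21}, so you cannot transfer a lower bound on $|\cF_n|$ to a lower bound on $s(n)$ in this way. Drop that route entirely and use the trivial containment.
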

\begin{proof}
Since $\cF \subseteq \her(\cF^\square)$ and $\cF \subseteq \mon(\cF^\square)$, we have a lower bound
of $s(n)$ for the labeling schemes for both of these classes. Since $\cF$ contains a graph $G$ with
at least one edge, the Cartesian products contain the class of hypercubes: $\her(\{K_2\}^\square)
\subseteq \her(\cF^\square) \subseteq \mon(\cF^\square)$. A labeling scheme for
$\her(\{K_2\}^\square)$ must have size $\Omega(\log n)$ (which can be seen since each vertex of
$K_2^d$ has a unique neighborhood and thus requires a unique label). This establishes the lower
bound for $\her(\cF^\square)$, since the labels must have size $\max\{ s(n), \Omega(\log n)\} =
\Omega(s(n) + \log n)$. Finally, by \cref{prop:monotone-degeneracy}, the number of $n$-vertex graphs
in $\mon(\cF^\square)$ is at least $2^{\Omega(n \dgn(n))}$, so there is a lower bound on the label size
of $\Omega(\dgn(n))$, which implies a lower bound of $\max\{ s(n), \Omega(\log n), \Omega(\dgn(n))\} =
\Omega(s(n) + \dgn(n) + \log n)$ for $\mon(\cF^\square)$.
\end{proof}

\section{Time Complexity}
\label{section:complexity}
Let $\cF$ be a hereditary graph class with an adjacency labeling scheme whose encoder and decoder
are available as black box algorithms.  Then our decoders for $\her(\cF^\square)$ and
$\mon(\cF^\square)$ are deterministic and work in polynomial time in the size of the labels.  The
encoders we have described are randomized and produce correct labels
in expected time polynomial in the number of vertices, but they can be made
deterministic with a small loss on the complexity (and the size of the
labels, see the remark after \cref{thm:perfect-hashing}). 
%Specifically, the algorithms for producing the labels in Phase 1
%and Phase 3 are randomized algorithms that produce correct labels in expected time polynomial in the
%number of vertices.
The encoding algorithm in Phase 3 requires to be given the input graph $G \in
\mon(\cG)$ (in our application, $\cG = \cF^\square$) together with a graph $H \in \cG$ on the same
vertex set that contains $G$ as a subgraph.  How exactly the graph $H$ might be determined is left
unspecified and depends on the original class $\cG$.
%It remains an interesting and challenging open
%problem to design labeling schemes for $\her(\cF^\square)$ and $\mon(\cF^\square)$ whose encoder and
%decoder are both purely deterministic and efficient.  This is not known even for the class of
%induced subgraphs of hypercubes. 

\vspace{2em}
\begin{center}
\textbf{\large Acknowledgments}
\end{center}
We are very grateful to Sebastian Wild, who prevented us trying to reinvent perfect hashing.
We also thank the anonymous reviewers for their comments, in particular for pushing us to get
time-efficient encoding and decoding algorithms.

%\pagebreak

\bibliographystyle{alpha}
\bibliography{references.bib}
\end{document}